\newtheorem{lem}{Lemma}
\newtheorem{prop}{Proposition}
\newdefinition{rem}{Remark}
\newproof{proof}{Proof}
\newcommand{\intpos}[2]{\int_0^\infty{#1\rd{#2}}}
\newcommand{\e}{{\mathrm e}}
\newcommand{\pp}[1]{\left({#1}\right)}
\newcommand{\ind}{\mathbbm{1}}
\DeclareMathOperator{\bbE}{\mathbb{E}}
\DeclareMathOperator{\bbP}{\mathbb{P}}
\newcommand{\E}[1]{\mathbb{E}\!\left[#1\right]}
\newcommand{\Var}[1]{\mathrm{Var}\left({#1}\right)}
\newcommand{\rd}{{\mathrm d}}
\newcommand{\Laone}{\Lambda_1}
\newcommand{\dbmd}{d^\ast}
\newcommand{\prob}[1]{\mathbb{P}\left(#1\right)}
\newcommand{\eval}[2]{\left. #1 \right\vert_{#2}}
\newcommand{\abs}[1]{\left\lvert#1\right\rvert}
\journal{Mathematical Biosciences}
\begin{document}

\begin{frontmatter}



\title{Real-time growth rate for general stochastic SIR epidemics on unclustered networks}


\author[maths]{Lorenzo Pellis\corref{cor}}
\author[stats]{Simon Spencer}
\author[Man,maths]{Thomas House}

\cortext[cor]{Corresponding author: l.pellis@warwick.ac.uk}

\fntext[maths]{Warwick Infectious Disease Epidemiology Research centre (WIDER) and Warwick
Mathematics Institute, University of Warwick, Coventry, CV4 7AL, UK.}
\fntext[stats]{Warwick Infectious Disease Epidemiology Research centre (WIDER) and Department of Statistics, University of Warwick, Coventry, CV4 7AL, UK.}
\fntext[Man]{School of Mathematics, University of Manchester, Manchester, M13 9PL, UK.}

\begin{abstract}
	\noindent{}Networks have become an important tool for infectious disease epidemiology.
	Most previous theoretical studies of transmission network models have either considered
	simple Markovian dynamics at the individual level, or have focused on the
	invasion threshold and final outcome of the epidemic. Here, we provide
	a general theory for early real-time behaviour of epidemics on large configuration
	model networks (i.e.~static and locally unclustered), in particular focusing on 
	the computation of the Malthusian
	parameter that describes the early exponential epidemic growth. Analytical, numerical and Monte-Carlo methods under a wide variety of Markovian and non-Markovian 
	assumptions about the infectivity profile are presented. Numerous examples provide explicit quantification of the impact of the network structure on the temporal dynamics of the spread of infection and provide a benchmark for validating results of large scale simulations.
	\end{abstract}

\begin{keyword}
Epidemic \sep Malthusian parameter \sep Basic reproduction number \sep
Configuration model \sep Branching process


\MSC[2010] 05C82 \sep 60K20 \sep 92D30

\end{keyword}

\end{frontmatter}

\section{Introduction}
\label{sec:Introduction}

The field of infectious disease epidemiology has benefitted from the use of
networks both as conceptual tools and as a practical representation of
interaction between the agents involved in the spread of
infections~\citep{Bansal:2007,Danon:2011}.  From a theoretical perspective,
they have been successfully used to obtain important insight in the behaviour
of epidemics in idealised populations. Most analytical results, however, have
either been derived in the specific case of a Markovian SIR model involving
constant infection and recovery rates
\citep[e.g.][]{DiedJoMet98,Eames:2002,Volz:2008,Ball:2008,Miller:2011,Decreusefond:2012,Barbour:2013,Janson:2013}
or involve quantities that do not depend on the temporal details of the disease
dynamics \citep[e.g.][]{Newman:2002,Kenah:2007,Ball:2008,Ball:2009}. In this
paper we consider general non-Markovian SIR epidemic models and focus our
attention on arguably the most important piece of information concerning the
system's temporal dynamics: the \emph{Malthusian parameter}, or \emph{real-time growth rate}. This quantity corresponds to the rate of exponential growth in the
number of infectives observed in many models when an epidemic takes off in a
large population and the susceptible population is still large enough that its
reduction can be ignored. 

Realistic patterns of contact between people typically involve repeated
interactions with the same individual, and significant heterogeneity in the number
of contacts reported~\citep{Danon:2012}.  From an analytical point of view,
such a population structure is associated with three problems that need to be
addressed before the epidemic dynamics can be fully understood. These problems
are called \emph{repeated contacts}, \emph{infection interval contraction}, and
\emph{generational ordering}. We begin by illustrating these problems on a
model scenario.

%

\begin{figure}%
\includegraphics[width=\columnwidth]{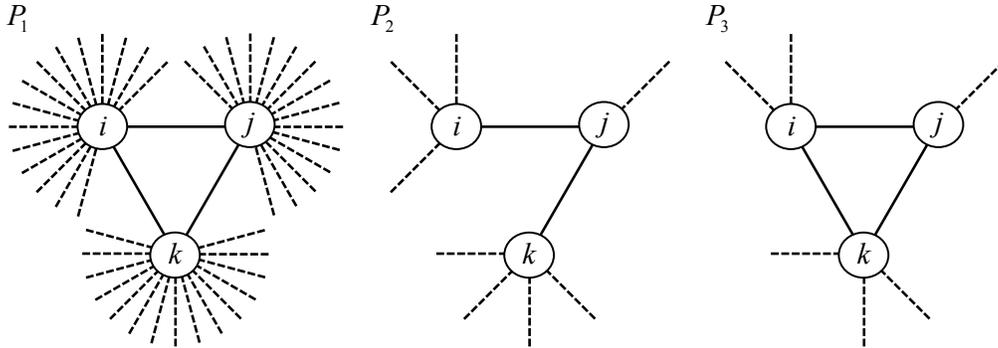}%
\caption{Schematic representation of three populations, $P_1, P_2$ and $P_3$. In $P_1$ 
mixing is homogeneous and hence all individuals are connected to each other. In
$P_2$ and $P_3$, $i$, $j$ and $k$ all have a small number of neighbours they can potentially infect, although in $P_2$ the network is tree-like, while in $P_3$ the network presents a triangle between $i, j$ and $k$.}%
\label{threepops}%
\end{figure}

Consider an infection spreading between individuals who are able to transmit when they enter in contact with each other. Throughout the paper we use the term
contact to mean an \emph{infectious contact}, i.e.~a contact that is `sufficiently intense' to result in an infection
whenever the individual that makes the contact is infectious and
the other is susceptible. A precise mathematical description of the model is given in Section \ref{sec:ModelDefinitions}, but in short we assume that, upon infection, individuals are attributed a (potentially time-varying) rate at which they make infectious contacts. When normalised, this gives the \emph{infectious contact interval distribution}, i.e.~the distribution of 
times between an individual becoming infected and the subsequent infectious contacts made by that individual (see also \citealp{Sve07,ScaliaTomba:2010,Kenah11}). 

When an infectious contact occurs, the infective chooses an individual to be the `destination' of the contact among a set of potential ones. Consider now a population $P_1$ where homogeneous mixing is
assumed, so that any pair of individuals can make infectious contacts, and compare it with
populations $P_2$ and $P_3$ where each individual can only make infectious contacts with individuals in a fixed set of neighbours (see Figure \ref{threepops}). In all cases we suppose that we are working with a large population of size $N\gg 1$, meaning that we will ignore effects that are, for example $O\pp{N^{-1}}$.  We represent populations $P_2$ and $P_3$ with a static (undirected) network where edges connect pairs of neighbouring individuals. 
In Section \ref{sec:ModelDefinitions} we regard the rate of making infectious contacts across each specified edge as our basic model ingredient. Note that population $P_1$ also admits a network representation, namely the complete network where each node is connected to all others; however, due to our assumption of a large population, the rate of making infectious contacts across each specific edge vanishes and the neighbouring relation becomes somewhat ephemeral.

We now focus our attention on a group of 3 individuals, labelled $i$, $j$ and $k$, where we assume that $i$ is infectious and the other two are susceptible, and we contrast the epidemic dynamics in $P_1$ with those in $P_2$ and $P_3$ (see Figure \ref{threepops}). 

First of all, whilst in $P_1$ individual $i$ can potentially infect
everybody in the population, in $P_2$ and $P_3$, no matter how infectious $i$ is, they cannot infect more than four other individuals, thus `wasting' part of their 
infectivity on \emph{repeated contacts} with the same individuals.

Secondly, in $P_1$ each infectious contact is made with an 
individual chosen at random and, because the population is large, it is unlikely 
for $i$ to contact the same individual more than once, so the average time at 
which transmission (to $j$, say) occurs is
equal to the average time at which $i$ makes an infectious contact with $j$. In
$P_2$ and $P_3$, where $i$ may try to infect $j$ multiple times, the time of transmission corresponds to the time of the \emph{first} infectious
contacts, which will occur on average earlier than the mean time at which a randomly selected infectious contact from $i$ to $j$ occurs.
We call this the problem of \emph{infection interval contraction}
. In choosing this terminology we have tried to avoid the very similar term `generation interval contraction', which is already adopted in \cite{KenLipRob08} to denote a slightly different phenomenon. Their approach based on survival analysis takes into account automatically what we here call the infection interval contraction. What they focus on instead is the fact that the time at which a susceptible is infected is the minimum of the times of all first infectious contacts from all potential infectors who are competing for infection of the same susceptible. The assumptions we make below 
avoid this problem as they imply that there is never more than one potential infector. 
In other words, the difference between infection interval contraction and generation interval contraction can be explained by the two different comparisons being made. In \citet{KenLipRob08}, different levels of infection prevalence are compared within a single epidemic model; on the other hand, here we are comparing a network-based epidemic (where repeated contacts occur) to a mass-action model (where each infectious contact leads to an infection). In particular, because we are interested in characterising the real-time growth rate, we assume we are in the early epidemic phase when the number of cases grows exponentially, i.e.~in our context the prevalence is assumed to be small.

Thirdly, once $i$ has infected $j$ in $P_2$ and $P_3$, then $j$ cannot infect $i$ and
`loses' a neighbour due to \emph{generational ordering}. Considering the next
generation of infection, if $j$ infects $k$ (an individual with only one
contact) then $k$ actually plays no further role in transmission events
due to this effect. In $P_1$, however, $j$'s infector is a negligible fraction
of its total neighbours and so this effect is insignificant.

What differentiates $P_3$ from $P_2$ is the presence of a triangle between $i, j$ and $k$. In the presence of short loops in the network (as it is the case, for example, if a small number of individuals all live in the same household), in addition to the previous three, two further effects become important. 

Fourth, if $i$ infects $j$ in $P_1$, the probability 
that $j$ infects $k$ before $i$ does is negligible. 
Therefore the event that $i$ infects $k$ is not 
affected by the epidemic in the neighbourhood of $i$ or 
what $i$ did before infecting $j$. The same occurs in 
$P_2$, because of the lack of a transitive link between 
$i$ and $k$. In $P_3$, instead, $j$ can infect $k$ 
before $i$ has the chance to, and so the number of 
susceptibles around $i$ can decrease because of 
infectious events not caused directly by $i$. We refer 
to this as the effect of \emph{local saturation of 
susceptibles}. 

Fifth, if $i$ infects $j$, $j$ infects $k$ and $i$ 
subsequently tries to infect $k$ (an event which does 
not result in an infection as $k$ is not susceptible 
any longer), we have two different ways of attributing 
cases to different generations of infection. Of course, 
assuming that $i$ is in generation 0, the natural 
choice would be to maintain the real-time description 
of who infects whom and place $j$ in generation 1 and 
$k$ in generation 2. However, this approach is 
analytically difficult to handle. The alternative is to 
consider all those that $i$ would have infected (both 
$j$ and $k$) and place them both in generation 1, in 
which case the transmission event from $j$ to $k$ is 
then overlooked \citep{Lud75,PelFerFra08}. We term this 
phenomenon \emph{overlapping generations}. In 
\citet{PelBalTra12} and \citet{BalPelTra14} the 
impact of this problem in defining and computing the 
basic reproduction number $R_0$ in models that involve 
small mixing groups (e.g.~households or workplaces) is 
carefully examined.

In generic social structures, these five effects often occur simultaneously. However, in this paper we consider scenarios in which they can be discriminated and progress can be made analytically, namely we focus on population with the structure of $P_2$, where the problems of local saturation of susceptibles and of overlapping generations described above need not be considered. For this reason, we make the strong assumption that that the proportion of possible transitive links is negligible
. This means that the network is locally tree-like. While some progress is possible for 
real-time growth rate calculations in the presence of many short loops in the
network (like in $P_3$), this often involves either restriction to compartmental dynamics or
approximate methods \citep{Fraser:2007,Ross:2009,Pellis:2010}. Analysis of
general stochastic dynamics of the kind we present here in the presence of
short loops would require a major conceptual advance not currently available, although see \citealp{BalPelTra14}, for a deeper exploration of these problems in the context of households models.


In addition to our first assumption of a locally tree-like network, we make two further assumptions. The second assumption is that the individuals are sufficiently `close' to each other
that the epidemic does indeed grow exponentially during its early phase rather
than more slowly (e.g.~quadratically, as would be expected on a two-dimensional lattice). More formally, this assumption requires that mean path lengths on the giant component are $O(\ln N)$.
Thirdly, we assume that the distribution of numbers of contacts is not too
heterogeneous (technically, that the second moment of the degree distribution
does not diverge) so that the Malthusian parameter does not diverge as the
system size becomes very large.

The three assumptions detailed above are fundamental to our approach, in the
sense that it is not clear how to analyse real-time behaviour of an epidemic on
a network mathematically if they do not hold. We also make other assumptions
that significantly simplify the analytical results obtained, but which can be
relaxed with a certain amount of elementary but potentially tedious algebra.
In particular, throughout we assume individuals are all identical to each
other, in the sense that there is no variation in susceptibility and in the
network model they only differ in terms of their degree. However, a key feature
of the present work is a careful treatment of an individual's infectiousness,
which is allowed to vary over time, according to some stochastic infectivity
profile. These infectivity profiles need not be the same for every individual,
but we assume that they are independent realisations from a specified
distribution that is the same for every individual. We also assume that there
are no degree-degree correlations, meaning that (given our other technical
assumptions) the configuration model can be used for the contact structure.

In setting up our framework, we took inspiration from \citet{DiedJoMet98}, where the authors focus on the computation of $R_0$ in the presence of repeated contacts under the same unclustered network approach discussed above. However, we here extend their work by adding the computation of the real-time growth rate and by moving from their deterministic framework to a stochastic one. Any results involving the real-time growth rate has its foundations on the so called \emph{Lotka-Euler} equation, originated in the field of demography and discussed, for example, in \citet{WalLip07} and references therein. 
In the simple case of a epidemic spreading in a large and homogeneously mixing population, the Lotka-Euler equation is derived as follows (see \citealp{DieHeeBri13}, p.~10 and 212). Assume the population has size $N$ and each infective, $t$ units after
having been infected, makes infectious contacts with other individuals at a rate $B(t)$. Then the
expected incidence $i(\tau)$ at absolute time $\tau$ satisfies the renewal equation 
\begin{equation}
\label{simplesystem}
i(\tau)=\frac{S(\tau)}{N}\int\limits_{0}^{\infty }{B(t)i(\tau- t )\:\text{d}t},
\end{equation}
where $S(\tau)$ is the number of susceptibles at time $\tau$. Early on, when almost the entire population is susceptible the incidence grows exponentially, $S(\tau)/N\approx 1$ and $B(t)$ in this linearised process describes the rate at which new cases are generated by a single infective at time $t$ after their infection. Substituting the Ansatz $i(\tau) = i_0 \e^{r \tau}$ into \eqref{simplesystem},
we deduce that the real-time growth rate $r$ must satisfy the Lotka-Euler
equation:
\begin{equation}
\label{LotkaEulerHomoMix}
\int\limits_{0}^{\infty }{B (t ) \e^{-r t}\:\text{d}t} = 1.
\end{equation}
Since the left-hand side of Equation \eqref{LotkaEulerHomoMix} is continuous in $r$ and has limits of 0
and $\infty$ (assuming
$B$ is not zero almost everywhere), a solution must exist by the intermediate value theorem. It is also strictly monotonic in $r$,
and so any solution must also be unique.

While we believe the question of real-time behaviour of epidemics on networks
is of inherent theoretical interest, the main aim of this work is
three-fold: (i)	to
make clear the key differences between the epidemic dynamics in homogeneous
mixing populations and on locally unstructured networks; (ii) to provide a comprehensive list of results
in analytically tractable cases, that can be used by modellers to validate
the outputs of complex simulations, for example individual-based ones; (iii) to provide the tools to assess the accuracy of approximations
for the real time growth rate that involve either ignoring the network structure altogether or neglecting part of the complexities to achieve significantly simpler analytical results, as done in \citet{Fraser:2007} and \citet{Pellis:2010}.


The rest of the paper is structured as follows. In Section \ref{sec:ModelDefinitions} we present our modelling approach, spelling out the assumptions behind the network construction and the epidemic spreading on it, and we define the fundamental quantities of interest in the network model and their counterparts in the limiting homogeneously mixing scenario. We then consider a list special cases in Section \ref{sec:SpecialCases}, for which we provide extensive analytical results. In Sections \ref{sec:OtherRandomTVIModels} and \ref{sec:NumericalMethod} we discuss more general analytical considerations and fully general Monte-Carlo methods for numerically computing the real-time growth rate. Finally, extensive numerical illustrations are presented in Section \ref{sec:NumericalResults} and final considerations in Section \ref{sec:conclusions}.


\section{Model definitions}
\label{sec:ModelDefinitions}

\subsection{Configuration network model}
\label{sec:NetworkConstruction}

We consider a static network representing a population of individuals and their
interactions. We assume the population size $N$ is large relative to the number
of infected individuals, which is in turn significantly larger than unity, as
we are interested in studying the asymptotic early spread of the infection.

To ensure that the network is locally unclustered, we assume the network is
constructed according to the so-called \emph{configuration model}
\citep{Molloy:1995} with the conventions most commonly used in epidemiology
\citep{Ball:2008,Ball:2010}. In this model, each individual $i$
($i=1,2,\dots,N$) is given a degree $d_i$ which is an independent realisation
of a non-negative integer-valued random variable $D$. The distribution of $D$
is called the \emph{degree distribution}. Individual $i$ is then allocated
$d_i$ stubs (half-edges). If the total number of stubs in the population is
odd, the process of attributing a degree to each individual is repeated. If the
number of stubs is even, stubs are selected at random without replacement and
paired.  The resulting network might have some short loops and self-edges, but
if a sequence of such networks is constructed and indexed with the population
size $N$ and we assume that $D$ has finite mean and variance, short loops and
self-edges appear with probability of order $O(N^{-1})$ and can be ignored as
$N\to\infty$ \citep{Durrett:2007}.


\subsection{The infectivity profile}
\label{sec:TheInfectivityProfile}

The function that describes the way an individual's infectiousness changes
through time is called their \emph{infectivity profile}. We make the
simplifying assumptions that individuals do not differ in terms of biological
susceptibility and that the infectivity profiles of different individuals are
independent and identically distributed according to some known distribution.
In addition, we assume that individuals can only transmit the infection across
edges, that each individual's infectious behaviour applies equally across all links with neighbours (in the precise way specified below) and that the infectious state of the individuals does not modify or
affect the network structure.

When infected, individual $i$ is attributed a realisation $\lambda _i(t)$ from
the infectivity profile distribution, where $t\geq 0$ is the time since the
infection of $i$. For each neighbour in the network, $i$ makes infectious
contacts at the points of an inhomogeneous Poisson process with rate
$\lambda_i(t)$. Thus, given $\lambda_i(t)$ for individual $i$, the probability
that $i$ makes infectious contact with a given neighbour in the interval
$(t,t+\Delta t)$ is $\lambda_i (t)\Delta t+o(\Delta t)$. If the individual
contacted by $i$ is susceptible, then they become infected. We assume that the
Poisson processes governing the infectious contacts between every pair of
individuals are conditionally independent given the infectivity profile of
individual $i$.


We denote by $\Lambda(t)$ ($t\geq 0$) the random infectivity profile, which is
the same for every individual and of which the $\lambda_i(t)$s are independent
realisations and we assume that $\bbE[\int_0^\infty {\Lambda(t) \text{d}t}] <
\infty$ and $\int_0^\infty \bbE[\Lambda(t)] \text{d}t < \infty$ and so by
Fubini's Theorem these two integrals must be equal. 

\subsection{The basic reproduction number $R_0$}
\label{sec:R0} 

We begin by calculating $R_0$, defined to be the expected number of secondary
infections caused by a typical infective in the early stages of the epidemic.
Let the random variable $A = \int_0^\infty {\Lambda(t)\text{d}t}$ represent the
total infectivity spread by an arbitrary infective across each of the links they have. Conditional on degree $D=d$ and
total infectivity $A=a$, the number of secondary cases that an infective will cause
during the early stages of the epidemic is distributed ${\rm Bin}(d-1,1-{\rm
e}^{-a})$. The number of trials $d-1$ takes into account the fact that the
infective must have acquired the infection from one of its neighbours, who is
therefore no longer susceptible (the generational ordering effect mentioned in the introduction). All other neighbours are almost surely
susceptible because in the early stages almost all of the population is
susceptible and short loops in the network appear with negligible probability. The probability of
transmission occurring across an edge between the infective and a susceptible
neighbour ($1-\e^{-a}$) takes into account that \emph{at
least} one infectious contact across a link is necessary, but all subsequent contacts are
ineffective. Note that the infection of each of the susceptible neighbours are
not independent events, but are conditionally independent given $A=a$.

A typical infective during the early stages of the epidemic has more
connections than the average individual in the network. The probability that
transmission across an edge reaches an individual of degree $d$ is equal to
\mbox{$d\times \mathbb{P}(D=d) / \mathbb{E}[D]$}, which we call the
\emph{degree-biased distribution}. Conditional on having total infectivity $a$,
this individual of degree $d$ then infects on average $(d-1) \pp{1-{\rm
e}^{-a}}$ new cases. Since an individual's degree $D$ and their infectivity $A$
are independent, removing the conditioning gives 
\begin{equation}
\label{R0Direct}
R_0 = \frac{\bbE[D(D-1)]}{\bbE[D]} \mathbb{E}\left[1-{\rm e}^{-A}\right].
\end{equation}
Hereafter we denote the mean degree of the node reached by a randomly selected edge by
\begin{equation}
\label{degreefactor}
\dbmd = \frac{\E{D(D-1)}}{\E{D}}.
\end{equation}

\subsection{The real-time growth rate}
\label{sec:TheRealTimeGrowthRate}

In this paper, our primary interest is studying the real-time growth rate -- the rate of the exponential growth in the number of infectives during the early phase of an emerging epidemic -- which is denoted by $r$. Unlike
$R_0$, the real-time growth rate depends on the full infectivity profile
$\Lambda(t)$ and not just its integral $A$. Of all the infectious contacts
towards a susceptible neighbour, only the first contact results in an
infection. The rate at which infections occur is therefore equal to the rate at
which contacts are made multiplied by the probability that no contacts have
occurred previously. Let $\Laone(t)$ denote the rate at which \emph{first}
contacts occur, and hence
\begin{equation}
\label{DefLaone}
\Laone(t) = \Lambda (t) \e^{-\int_0^t{\Lambda(s)\rd s}}.
\end{equation}
Note that $\intpos{\Laone(t)}{t} = 1-\e^{-A}$. We can also describe the distribution of times since infection at which infectious contacts are made, which has probability density function (pdf) $w(t)=\E{\Lambda(t)}/\E{A}$ and the distribution of times since infection at which infections are made, which has pdf $w_1(t)=\E{\Laone(t)}/\E{1-\e^{-A}}$. We refer to the former as the \emph{infectious contact interval distribution} and the latter as the \emph{generation interval distribution}. Let the random variables $W$ and $W_1$ denote draws from these two distributions, which are both implicitly conditional on the fact that at least one infectious contact is made across the edge. The mean generation interval, often referred to in the literature as \emph{generation time}, of the epidemic on the network is therefore
\begin{equation}
\label{DefTgNet}
T_g = \E{W_1}.
\end{equation}

Using again the degree-biased distribution to compute the degree of a randomly
selected new infective, the \emph{average infection intensity} of a typical
infective is therefore
\begin{equation}
\label{BetaDirect}
\beta(t) = \dbmd \E{\Laone(t)}.
\end{equation}

As for the derivation of Equation \eqref{LotkaEulerHomoMix}, given the rate $\beta(t)$ at which a typical infective generates new infections $t$ units of time after their infection, the
expected incidence $i(\tau)$ at absolute time $\tau$ satisfies the
renewal equation 
\begin{equation}
\label{RenewalNet}
i(\tau)=\int\limits_{0}^{\infty }{\beta (t)i(\tau- t )\:\text{d}t}.
\end{equation}
Substituting the Ansatz $i(\tau) = e^{r\tau}$, the real-time growth rate $r$ must satisfy
\begin{equation}
\label{LotkaEuler}
\int\limits_{0}^{\infty }{\beta (t ) \e^{-r t}\:\text{d}t} = 1,
\end{equation}
which has the same form of Equation \eqref{LotkaEulerHomoMix} because all the complexities due to the network structure are absorbed in $\beta(t)$. Like for Equation \eqref{LotkaEulerHomoMix}, the solution exists (provided
$\beta$ is not zero almost everywhere) and is unique.
To our knowledge, the most
comprehensive derivation of this equation from an underlying stochastic process on a network
is given by \citet{Barbour:2013}, and in a deterministic framework by
\citet{OD14}. Note also
that $\int_{0}^{\infty }{\beta (t ) \:\mathrm{d}t} = R_0$, so Equation \protect\eqref{LotkaEuler} can be also formulated as
\begin{equation}
\int\limits_0^\infty{w_1 (t ) \e^{-r t}\:\text{d}t} = \frac{1}{R_0},
\label{rR0rel}
\end{equation}
from which it is clear that, for
$R_0<1$, $R_0=1$ and $R_0>1$, the real-time growth rate $r$ is strictly
negative, zero, and strictly positive, respectively.

One does not observe the real-time growth rate in practice unless $R_0>1$ and
the epidemic takes off. However, if $R_0<1$, the rate $r$ of exponential decay can still be observed in theory if it is
assumed that the number of initial infectives is large enough to avoid
stochastic fluctuations, but still small relative to the total population.

\subsection{Homogeneously mixing model}
\label{sec:HomogeneouslyMixingModel}

To understand the effects that the configuration network structure has on the
dynamics of the infection, it is useful to compare the configuration network
model to one in which individuals mix homogeneously. In this model an infected
individual makes contacts with every individual in the population with equal
probability and so this model can be thought of as an epidemic on a complete
graph. Clearly the complete graph does not have the locally tree-like structure
of the configuration network; however, since each individual in a population of
size $N$ has $N-1$ neighbours, the probability of an individual attempting to
contact the individual that infected them is $O(N^{-1})$ and can be neglected
when $N$ is large. Hence in its early stages, the epidemics again has a
branching structure and all five problems spelt in the introduction disappear as $N$ tends to infinity. For more details
see  \citet{Barbour:2013}.


More formally an infected individual is allocated an infectivity profile 
$\Lambda^h(t)$ and makes infectious contacts to each of their $N-1$ neighbours 
at the points of an inhomogeneous Poisson process with rate $\Lambda^h(t)$, 
where $t$ is the time since they were infected and the superscript $h$ refers 
to the homogeneously mixing population. 
The expected total rate at which an infective makes infectious contacts is therefore $\beta^h(t)=(N-1)\E{\Lambda^h(t)}$. 
As we let the population size $N$ tend to infinity, we reduce $\Lambda^h(t)$ such that $\beta^h(t)$ remains fixed.


In the limit of an infinitely large homogeneously mixing population, the probability that an infective contacts the same individual more than once is zero. Therefore in the early stages of the epidemic, when the population is almost entirely susceptible, every infectious contact results in an infection. The expected rate at which infections are made is therefore $\beta^h(t)$ -- the same as the rate at which infectious contacts are made. This is the key difference between the homogeneously mixing model and the network model, as in the network model infections occur proportionally to the rate of the expected \emph{first} contact. 

Note that the basic reproduction number for the homogeneously mixing model is
given by $R_0^h = \int_0^\infty{\beta^h(t)}\rd t$. Similarly, let $w_1^h(t)$
denote the \emph{generation interval distribution}, which represents the
distribution of times relative to one's infection at which infections are made
in a large homogeneously mixing population. Because every contact results in an infection, in a large and homogeneously mixing population the contact interval distribution and the generation interval distribution are the same. Analogously to the configuration
network model $w_1^h(t) = \beta^h(t) / R_0^h$. Note that $w_1^h$ can also
be interpreted as the pdf of a random variable, $W_1^h$ say, representing the
time of an infection from a typical infective, conditional on at least one
occurring. Hence, the \emph{generation time} $T_g^h = \E{W_1^h}$ represents the
mean time interval between the infection of a case and one randomly selected
secondary infection, provided that at least one occurred. Alternatively (see
\citealp{Sve07}) $T_g^h$ can equivalently be defined as the time interval
between the infection of a case and the infection of that case's infector.
Finally let the real-time growth rate for the homogeneously mixing model be
denoted by $r^h$, which is the unique solution to the Lotka Euler
equation
\begin{equation}\label{LotkaEulerHomo}
\int\limits_0^\infty\beta^h(t)\e^{-r^ht}\:\mathrm{d}t=1.
\end{equation}

\subsection{Approximation to the real-time growth rate}
\label{sec:Approximater}
For some infectivity profiles the function $\beta(t)=R_0w_1(t)$ may be difficult to compute, preventing the calculation of $r$  for the configuration network model. A possible approximation $\beta_{\text{app}}$ of $\beta$ involves replacing the generation interval distribution $w_1(t)$ with the infectious contact interval distribution $w(t)$. Hence,
\begin{equation}
\label{DefBetaApp}
\beta_{\text{app}}(t) = R_0 w(t).
\end{equation}

This approximation treats correctly the overall infectivity and the probability of infections occurring, but approximates
the times at which the real infections are made with the times at which
infectious contacts are made. The result is an approximation of the real-time
growth rate $r$ on a network, given by the unique value $r_{\mathrm{app}}$ that
satisfies $\int_0^\infty \beta_{\mathrm{app}}(t) {\rm
e}^{-r_{\mathrm{app}}t}{\rm d}t = 1$. Some authors \citep{Fraser:2007,Pellis:2010} have
considered a similar sort of approximation in the slightly more complex case of
a households model, where the additional problems of local saturation of
susceptibles and overlapping generations cannot be neglected (see also
\citealp{BalPelTra14}, for a careful analysis of this sort of approximation).

\subsection{Comparisons between models}
\label{sec:Comparisons}

In order to make meaningful comparisons between the configuration network and
the homogeneously mixing models it is necessary to match certain quantities.
In particular, we first need a matching condition involving the total infectivity (integrated over time) to establish the probabilities that events happen (irrespective of when). Second, we need a way to match the timescale dimension of the infectivity profile in the different models, to guarantee a meaningful comparison of the temporal dynamics.

The expected total infection pressure exerted by an individual in the network model is
$\dbmd\E{A}$, whilst in the homogeneous mixing model it is $R_0^h$. We take as our first matching condition that the overall infection pressure from each infective is
the same in each model, i.e.~$\dbmd\E{A} = R_0^h$. Another alternative matching criteria could have been assuming that $R_0=R_0^h$. The reason why we chose the former setting is that
$R_0=\dbmd\E{1-\e^{-A}}$ already takes into account some of the
 aspects of repeated contacts due to the network structure (through the probability of at least one infectious contact across an edge $1-\e^{-A}$) which is among the effects we wish to investigate. Furthermore, note that if $A$ is random, it is arguably more intuitive to keep fixed the mean of $A$ than the mean of the `distorted' random variable $1-\e^{-A}$. However, note that if we increase the connectedness of
the network model in such a way that $R_0$ is kept constant then we must have
that $\E{A}\to 0$, which implies that $\E{1-\e^{-A}}\rightarrow 0$ and therefore these two possible comparisons
are asymptotically equal.

The matching on the temporal dimension between the homogeneously mixing model and the network model is achieved by assuming that $\Lambda^h(t)\propto\Lambda(t)$ for all $t$, almost surely.

Within either model, care also needs to be taken when comparing the impact of
various infectivity profiles. Depending on the context, we might keep fixed $R_0^h$ or $R_0$ (i.e.~$\E{A}$ or $\E{1-\e^{-A}}$) when varying the infectivity 
profile. Finally, in terms of temporal scale, the choice of what to keep fixed in general depends on the details of of the infectivity profile formulation, so we specify clearly how the comparison is made in each numerical example.

\section{Special cases}
\label{sec:SpecialCases}

\begin{figure}[tb]
\includegraphics[width = 0.95\textwidth]{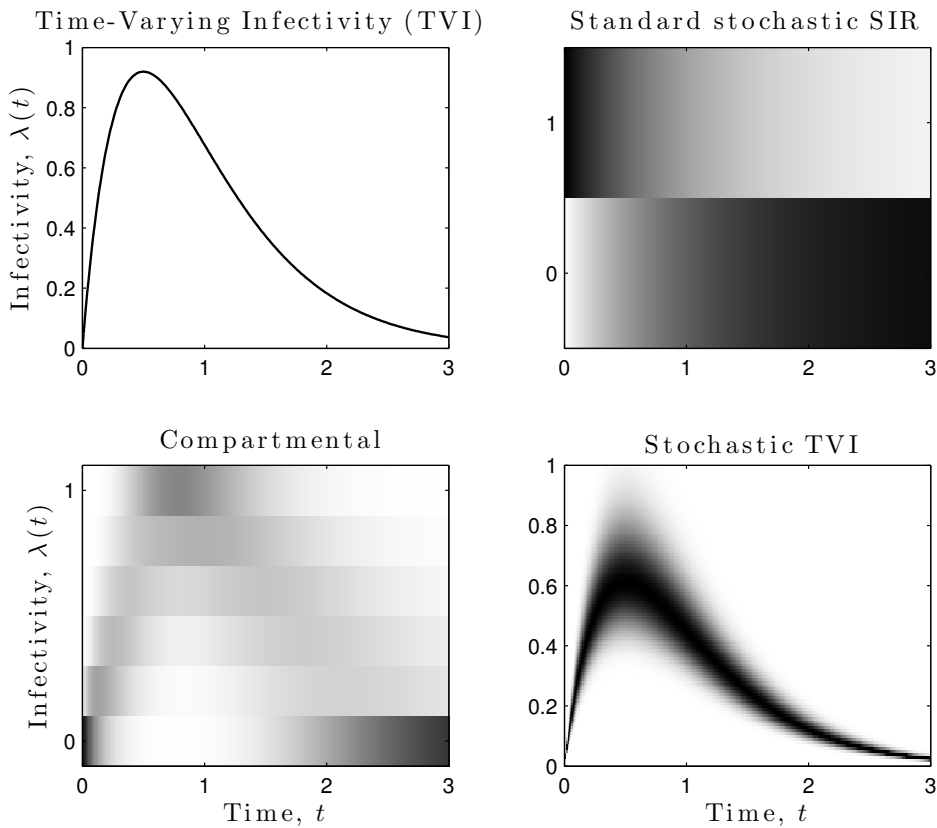}
\caption{Illustration of the four different model classes considered; a darker shade denotes higher probability. }
\label{modelfig}
\end{figure}

We now consider some specific infectivity profile distributions where more
explicit calculations of the real time growth rate can be made. Some of these special cases have already been considered in other studies (see e.g.~\citealp{WalLip07}), although predominantly in large homogeneously mixing populations and not in such generality. It is
convenient to categorise these examples into a few broad classes, as shown in
Figure~\ref{modelfig}. In the first class, referred to as the
\emph{time-varying-infectivity} class, the infectivity profile
$\Lambda(t)$ is non-random and therefore is the same for all individuals. The
second class is referred to as the \emph{standard stochastic SIR model} class,
and assumes constant infectivity during an infectious period with a random but
specified length. We then consider \emph{compartmental} models where the
infectious period is obtained from a phase-type distribution and infectivity
can depend on the compartment and we conclude with \emph{general time
varying-infectivity} models. 

\subsection{Non-random time-varying-infectivity models}
\label{sec:NonRandomTVIModels}

In the \emph{time-varying-infectivity} (TVI) class (see \citealp{Pellis:2010}),
the infectivity profile $\Lambda(t)$ is non-random and is therefore the same
for all infected individuals. Hence the total infectivity across a link $A$ is constant, as well as
the probability of transmission across a link, $1-\e^{-A}$. Therefore, $R_0 = \dbmd
(1-\e^{-A})$.

\subsubsection{Gamma-shaped TVI}
\label{sec:GammaShapedNonRandomInfectivity}

Assume that the infectivity profile is proportional to the pdf of a
$\Gamma$-distributed random variable with shape parameter $\alpha$ and scale
parameter $\gamma$. Let $f_{\Gamma}(t) = \gamma^\alpha t^{\alpha-1} \e^{-\gamma
t} / \Gamma(\alpha), t>0,$ denote the pdf of a Gamma random variable with these
parameters and let $F_{\Gamma}(t)$ denote the corresponding cumulative density function (cdf). Thus, $\Lambda
(t)=A f_{\Gamma}(t)$ in the network model and therefore the matched
homogeneously mixing model (see \ref{sec:Comparisons}) has
$\beta^h(t)=R_0^hf_{\Gamma}(t)$.

In the homogeneously mixing model there is an explicit formula for the
real-time growth rate. Here, $w^h(t) = f_{\Gamma}(t)$ and $T_g^h = \alpha
/ \gamma$. The Lotka-Euler equation \eqref{LotkaEulerHomo} becomes
$$R_0^h\int\limits_0^\infty f_{\Gamma}(t)\e^{-r^ht}\:\mathrm{d}t=1,$$
where the left-hand side is proportional to the moment generating function (mgf) of the Gamma
distribution evaluated at $r^h$, i.e.~$\pp{\frac{\gamma}{\gamma+r^h}}^\alpha$
. Hence, $r^h = \gamma \pp{
\pp{R_0^h}^{\frac{1}{\alpha}} - 1 }$.

On the network, we have 
\[ \Laone(t) = A f_{\Gamma}(t) \e^{-A F_{\Gamma}(t)} \]
and $w_1(t) = \Laone(t)/(1-\e^{-A})$. The Laplace transform of $w_1(t)$ appears to be intractable, so quadrature methods are then needed to compute $T_g$ and $r$ numerically\footnote{In terms numerical stability, when calculating the Laplace transform of $\Laone(t)$ for small values of $A$ (i.e.~when $r$ becomes negative and hence $\e^{-rt}$ explodes for large $t$), we recommend using the fact that $\Laone(t)\e^{-zt} = A \frac{\gamma}{z+\gamma} g(t)\e^{-AF_\Gamma(t)}$, where $g$ is the pdf of a Gamma distribution with parameters $\alpha$ and $z+\gamma$.}.

\subsubsection{Reed-Frost model}
\label{sec:ReedFrostModel}

In the \emph{Reed-Frost model} (see \citealp{AndBri2000}), each infected
individual experiences a latent period of the same fixed duration $T$ after
which they spread all their infectivity instantaneously. The Reed-Frost model
forms the limit when the variance in the times of infectious contacts
tends to zero.


There is only one point in time at which and an infectious contact and therefore an infection can occur, and so $T_g^h = T_g = T$, and hence $w_1^h(t) = w_1(t) = \delta_{T}(t)$, where $\delta_{\tau}(t)$ is the Dirac's delta function centred at $\tau$. Direct solution of the Lotka-Euler equation leads to $r^h = \ln\pp{R_0^h} / T_g^h$ and $r = \ln\pp{R_0} / T_g$.

\subsubsection{Heaviside TVI}
\label{sec:HeavisideFunction}

In this case the infectivity profile is some constant $\lambda$ between times
$u$ and $v$ and zero elsewhere. Hence, $\Lambda (t)= \lambda \ind_{\{ u\le
t<v\} }$ for the network model and $\beta^h(t)=R_0^h \ind_{\{ u\le t<v\} }$ for
the homogeneously mixing model, where $0\leq u<v$ and $\ind$ denotes the
indicator function.


For the homogeneously mixing model, $w_1^h(t) = \ind_{\{ u\le t<v\} } / (v-u)$, $T_g^h = (u+v)/2$ and $r^h$ satisfies the transcendental equation 
\[R_0^h (\e^{r^h u}-\e^{r^h v}) = r^h.\]

On the network, $\Laone(t)=\lambda {\rm e}^{-\lambda (t-u)}\ind_{\{u\leq
t<v\}}$ and hence
$$T_g = \frac{1}{\lambda}+\frac{u-v\e^{-\lambda(v-u)}}{1-\e^{-\lambda(v-u)}}.$$
The real-time growth rate $r$ satisfies the transcendental equation
\begin{equation} 
\label{rforHeaviside}
 \dbmd \frac{\lambda }{\lambda +r} \left(\e^{-ru} -\e^{-rv-\lambda (v-u)} \right) = 1.
 \end{equation} 

The parameter $u$ represents the length of the latent period. When $u=0$ this
model becomes the SIR model with constant infectious period, which is treated
in Section \ref{sec:ConstantInfectiousPeriod}.

\subsection{Stochastic SIR model}
\label{sec:StandardSIRModel}

In the \emph{stochastic SIR model} (hereafter denoted as sSIR), each infected
individual remains infectious for a period of duration $I$, where $I$ is a
non-negative random variable having arbitrary but specified distribution.
During the infectious period, infectious contacts towards each neighbour occur
at the points of a homogeneous Poisson process with rate $\lambda$. Therefore,
$\Lambda(t)=\lambda\ind_{\{0\le t < I\}}$, $A = \lambda I$ and $\E{\Lambda(t)}
= \lambda \pp{1-F_I(t)}$, where $F_I$ denotes the cdf of $I$.

For the homogeneously mixing model $\beta^h(t)=\frac{R_0^h}{\bbE[I]}\ind_{\{0\leq t<I\}}$
and $w_1^h(t) = \pp{1- F_I(t)} / \E{I}$.
Integrating by parts, we obtain
\begin{equation}
\label{TghsSIR}
T_g^h = \frac{\E{I^2}}{2\E{I}}.
\end{equation}
Similarly, the Lotka-Euler equation \eqref{LotkaEulerHomo} becomes
\begin{equation}
\label{rhforsSIR}
R_0^h \frac{1 - M_I(r^h)}{r^h \E{I}}=1,
\end{equation}
where $M_I(z)=\E{\e^{-zI}}$ is the moment generating function of $I$.

For the network model we calculate $T_g$ and hence $r$ as follows. Recall that the random variable $W_1$ represents the time at which infection crosses an edge, given that this infection does occur, and that $W_1$ has pdf $w_1(t)$. Let $U\sim {\rm Exp}(\lambda)$ be the time to the first event in a homogeneous Poisson process with rate $\lambda $, which represents a viable infectious contact if and only if it occurs before the recovery time of the infective. Hence, $W_1=U|(U<I)$. 

From the definition of conditional expectation,
\begin{align*}
M_{W_1} (z) & = \E{{\rm e}^{-zW_1} }\\  & =  \frac{\E{{\rm e}^{-zU} \ind_{\{ U<I\} } }}{{\rm {\mathbb P}}\left(U<I\right)} .
\end{align*}
Note also that
\begin{align*}
   \E{ {{\text{e}}^{-zU}}{{\ind}_{\{U<I\}}} } &=\E{\int\limits_{0}^{I}{\lambda {{\text{e}}^{-\lambda s}}{{\text{e}}^{-zs}}\;\text{d}s} } \\ 
 & =\E{\frac{\lambda }{\lambda +z}\left( 1-{{\text{e}}^{-(\lambda +z)I}} \right)} \\ 
 & =\frac{\lambda }{\lambda +z}\left( 1-M_I (\lambda +z) \right) .
\end{align*}

Letting $z\to 0$ in the calculations above we also obtain ${\rm {\mathbb
P}}\left(U<I\right) = 1-M_I (\lambda )$, and hence,
\begin{equation}
\label{FranksMGFofY}
M_{W_1} (z)=\frac{\lambda }{\lambda +z} \left(\frac{1-M_I (\lambda +z)}{1-M_I (\lambda )} \right).
\end{equation}
Equation \eqref{FranksMGFofY} has been previously derived by Prof.~F.~Ball and
independently by Prof.~O.~Diekmann (personal communications). As a note of
caution, $M_{W_1}(z)$ as defined in \eqref{FranksMGFofY} is undefined for $z =
-\lambda$, but can be extended by continuity to give $M_{W_1}(-\lambda) =
\lambda\E{I} / \pp{1-M_I(\lambda)}$.

The generation time $T_g = \E{W_1}$ can now be obtained via 
\begin{equation}
\label{TgsSIR}
T_g = \frac{1}{\lambda} + \frac{\eval{\frac{\rd M_I(z)}{\rd z}}{z = \lambda}}{1-M_I(\lambda)}.
\end{equation}
Note that the second term in \eqref{TgsSIR} is negative, and so $T_g <
1/\lambda$. The real-time growth rate $r$ is the unique solution of the
Lotka-Euler equation, which in this notation becomes $R_0 M_{W_1}(r) = 1$.  

\subsubsection{Markovian SIR model}
\label{sec:MarkovianSIRModel}

The notable special case of the sSIR model for which $I$ is exponentially
distributed is usually referred to as the \emph{Markovian SIR model} and is
typically the most analytically tractable. Let $I\sim {\rm Exp}(\gamma )$, so
that $\gamma$ represents the recovery rate. Then $\E{I} = 1/\gamma$ and $M_I(z)
= \gamma / (\gamma+z)$. 

For the homogeneously mixing model, $w_1^h(t)= \gamma {\rm e}^{-\gamma t} $, so that the random variable $W_1^h$, representing the time to an infection given that one occurs,  is Exp($\gamma$). Hence $T_g^h = 1/\gamma$, and 

\begin{equation}
\label{rhMark}
r^h =\frac{R_0^h-1}{T_g^h}.
\end{equation}

On the network, we have
\[\Laone(t)=\left\{\begin{array}{cc} {\lambda {\rm e}^{-\lambda t} } & {0\le t<I} \\ {0} & {{\rm otherwise}} \end{array}\right. \] 

and therefore $\E{\Laone(t)}=\lambda \e^{-(\lambda +\gamma )t}$. The probability of infection across an edge is $\E{1-\e^{-A}}=\lambda /(\lambda +\gamma )$, which can also be derived by a competing hazards argument, and $w_1(t)=(\lambda +\gamma ){\rm e}^{-(\lambda +\gamma )t} $. Therefore, $W_1\sim \mathrm{Exp}(\lambda + \gamma)$, whence $T_g= 1/(\lambda+\gamma)$ and $\beta(t) = R_0 \pp{\lambda+\gamma} \e^{-\pp{\lambda+\gamma}t}$. The real-time growth rate $r$ is therefore

\begin{equation}
\label{rforMark}
r = \frac{R_0 - 1}{T_g}.
\end{equation}
Note the strong similarity between Equations \eqref{rhMark} and
\eqref{rforMark}. If the homogeneously mixing model is matched to the network
model as described in Section \ref{sec:Comparisons} (which in this context means keeping both $\lambda$ and $\gamma$ constant), then we find
$r^h=\lambda\dbmd-\gamma$, whilst $r=\lambda(\dbmd-1)-\gamma$. The term $(\dbmd
- 1)$ in $r$ instead of $\dbmd$ in $r^h$ accounts exactly for the amount of
infectivity wasted in repeated contacts towards already infected neighbours.

\subsubsection{Gamma-distributed infectious period}
\label{sec:GammaDistributedInfectiousPeriod}
Assume that $I$ follows a gamma distribution with shape parameter $\alpha$ and scale parameter $\gamma$, i.e.~$I\sim \Gamma(\alpha,\gamma)$. In this case $\E{I} = \alpha / \gamma$ and $\E{I^2} = \alpha\pp{1 + \alpha}/\gamma^2$ and hence from \eqref{TghsSIR}
\begin{equation}
\label{TghsSIRGamma}
T_g^h = \frac{1+\alpha}{2\gamma},
\end{equation}
(see also \citealp{Sve07}). The Lotka-Euler equation \eqref{rhforsSIR} can be solved numerically to find $r^h$.

For the network model, we obtain from \eqref{TgsSIR} that
\begin{equation}
\label{TgsSIRGamma}
T_g = \frac{1}{\lambda} - \frac{\frac{\alpha}{\gamma+\lambda}\pp{\frac{\gamma}{\gamma+\lambda}}^\alpha}{1-\pp{\frac{\gamma}{\gamma+\lambda}}^\alpha} 
\end{equation}
and the real-time growth rate $r$ can be found numerically as the solution of $R_0 M_{W_1}(r) = 1$ using \eqref{FranksMGFofY}.

\subsubsection{Constant infectious period}
\label{sec:ConstantInfectiousPeriod}
The constant infectious period, in which $I=\iota$ almost surely, is the limiting case of a Gamma distributed infectious period where $\alpha \to \infty$ while $\E{I}=\alpha/\gamma$ is kept constant. This also corresponds to a special case of the the non-random Heaviside infectivity profile discussed in Section \ref{sec:HeavisideFunction}, when $u=0$ and $v=\iota$.
In this case, from \eqref{TghsSIR}, $T_g^h = \frac{\iota}{2}$ and, from \eqref{rhforsSIR}, $r^h$ satisfies
\begin{equation}
\label{rhforConst}
R_0^h\frac{1-\e^{-r^h\iota}}{r^h\iota} = 1.
\end{equation}

On the network, from \eqref{TgsSIR},
\begin{equation}
\label{TgforConst}
T_g = \frac{1}{\lambda} - \frac{\iota\e^{-\lambda\iota}}{1-\e^{-\lambda\iota}}
\end{equation}
and, from \eqref{FranksMGFofY}, $r$ is the unique solution of
\begin{equation}
\label{rforConst}
R_0\frac{\lambda}{\lambda+z}\frac{1-\e^{-(\lambda+z)\iota}}{1-\e^{-\lambda\iota}}=1.
\end{equation}

\subsection{Compartmental models}
\label{sec:compartmentalmodels}

We now consider the case where the disease progression in an individual is
modelled by progression through membership of disjoint compartments, connected in general to form any combination of paths in series and parallel, with
different infectivities in each of them. Such a Markov reward process (see e.g.~\citealp{asmussen2003}), also sometimes referred to as `linear chain trickery' (see \citealp{Bre+12}, p.~105 and 106, or
\citealp{OD14}), allows for a flexible modelling approach in which quantities of
interest can be calculated using relatively simple formulae.

The starting point for the compartmental approach is that there is a finite
number of disease states, labelled by integer indices $a,b,\ldots$ between 1
and $m$, each with its own rate of making infectious contact with other individuals. Let an individual have a random variable
$X(t)$ for its disease state at time $t$, which takes the value $S$ if the
individual is susceptible, $R$ if the individual is recovered, or $I_{a}$ if
the individual is in the $a$-th disease state.  We suppose that the probability
of a newly infected individual starting in state $I_{a}$ is $\nu_a$, that the
rate of going from $I_{a}$ to $I_{b}$ is $\sigma_{a,b}$, and that the rate of going
from $I_{a}$ to $R$ is $\mu_a$. This general framework is shown in
Figure~\ref{phasefig}. Note that infected individuals can potentially start or finish their infectious life from any state $I_a$ (for example, newly infected cases can develop either severe or mild symptoms with some fixed probabilities, and recover at different rates; or distinction between compartments can be only fictitious, with the aim of achieving sojourn times with distributions other than exponential -- see Section \ref{sec:ExamplesComp}) and that infectious states may be visited any number of times. 
We assume that all rates are non-negative and we exclude pathological cases, by assuming finite infectivity in each state, no infectivity in state $R$ and that individuals cannot be infectious forever, i.e.~infectives recover with probability 1.
For notational convenience, let $\nu = \pp{\nu_1,\nu_2,\dots,\nu_m}$ be a row vector, $^\top$ denote vector transposition and $\sigma_a := \sum_b \sigma_{a,b}$ denote the total rate
at which an individual in infectious state $a$ moves to other infectious
states.

\begin{figure}[tb]
\centering
\includegraphics[width = 0.7\textwidth]{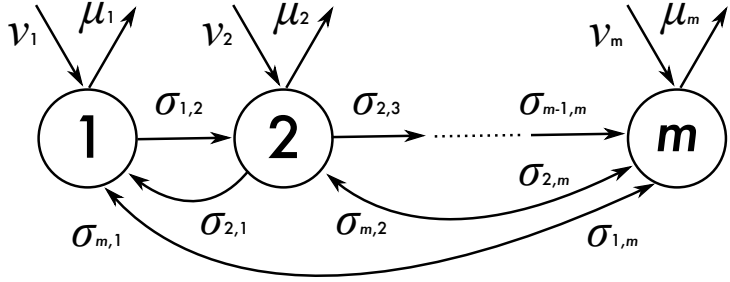}\\[1cm]
\caption{Illustration of a phase-type distribution.}
\label{phasefig}
\end{figure}

\subsubsection{Network mixing}
Let $\lambda_a$ be the rate at which an infective makes infectious contacts across an edge while in state $a$. From Sections \ref{sec:R0} and \ref{sec:TheRealTimeGrowthRate} we know that, in order to calculate all quantities of interest, we need to describe the average infection intensity $\beta(t)$. From Equation \eqref{BetaDirect}, $\beta(t) = \dbmd \E{\Laone(t)}$. This simple factorisation, which depends on our assumption that the network structure is unaffected by the infection process, implies that it is sufficient to study the transmission process across a single edge, as described by $\Laone(t)$. To characterise $\Laone(t)$ we consider the continuous-time Markov chain defined by events and rates
\begin{equation}\begin{aligned}
I_{a} & \rightarrow I_{b} & \text{ at rate } & \sigma_{a,b} \text{ ,}\\
I_{a} & \rightarrow R & \text{ at rate } & \mu_{a} \text{ ,} \\
I_{a} & \rightarrow J & \text{ at rate } & \lambda_a \text{ ,} 
\label{augtrans}
\end{aligned}\end{equation}
where state $J$ represents the situation where an infectious contact has been made and the edge cannot transmit the infection anymore. We decompose the state space $\mathcal{S}$ of this Markov chain in $\mathcal{S} =
\mathcal{A}\cup\mathcal{C}$, where $\mathcal{C}=\{I_a\}_{a=1,\dots,m}$ are the transient
states of the system and $\mathcal{A} = \{R,J\}$ are the absorbing states.  Consider now the generator matrix restricted to the transient states only, which we call $Q = \pp{q_{a,b}}$, where $q_{a,b} = \sigma_{a,b}$ for $a\neq b$ and $q_{a,a} = -\pp{\sigma_a+\lambda_a+\mu_a}$ ($a,b\in\{1,2,\dots,m\}$).Note that the rows of $Q$ do not sum to 0 because of the transitions into absorbing states. The probability of an infective being in infectious state $b$ at time $t$ after their infection if they started from state $a$ can be computed using matrix exponentials as
\begin{equation}
\label{Pstates}
\mathbb{P} \left[ X(t)=I_b \big| X(0) = I_a \right] = \pp{e^{tQ}}_{a,b} \text{ .}
\end{equation}
The rate at which the infective transmits along the edge when is state $I_a$ is $\lambda_a$, so
\begin{equation}
\Laone(t) = \sum_{a=1}^m \lambda_a \ind_{\{X(t)=I_a\}}
\end{equation}
and, from \eqref{BetaDirect},
\begin{eqnarray}
\label{bdef}
\beta(t) & = & \dbmd \E{\Laone(t)} \nonumber \\
& = & \dbmd\sum_{a=1}^{m}\lambda_a \mathbb{P} \Big[ X(t)=I_a \Big| X(0)\sim\nu\Big] \text{ ,}
\end{eqnarray}
or, in vector notation,
\begin{equation}
\label{bdefQ}
\beta(t) = \dbmd\nu \e^{tQ} \lambda^\top \text{ .}
\end{equation}

Note that $\beta(t)$ is described in terms of an exponential of a matrix but that, to calculate $R_0$ and the real-time growth rate $r$, we require only the integral or the Laplace transform of $\beta(t)$. The theory spelt out in \citet{DieHeeBri13}, Section 8.2, considers exactly this situation; therefore, we here only clarify that our definition of $Q$ suitably applies to the assumptions behind such a theory and present the results.

Recall that the spectral bound of a matrix is the largest of the real parts of its eigenvalues. Then, in our case,
\begin{lem}
\label{spectralboundQ}
The spectral bound of $Q$ is strictly negative.
\end{lem}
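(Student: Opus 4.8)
The plan is to read $Q$ as the generator of the continuous-time Markov chain \eqref{augtrans} killed upon absorption into $\mathcal{A}=\{R,J\}$, and to deduce the strict negativity of its spectral bound from the fact that absorption is certain. Two structural features drive the argument. First, $Q$ is a Metzler (essentially non-negative) matrix: its off-diagonal entries $q_{a,b}=\sigma_{a,b}$ are non-negative, so by the Perron--Frobenius theory for such matrices the spectral bound $s(Q)$ is itself a real eigenvalue, attained with a non-negative eigenvector. Second, writing $\mathbf{1}=(1,\dots,1)^\top$, the row sums of $Q$ are $(Q\mathbf{1})_a=\sigma_a-\pp{\sigma_a+\lambda_a+\mu_a}=-\pp{\lambda_a+\mu_a}\le 0$, so $Q$ is a sub-generator, reflecting the leakage of probability into $\mathcal{A}$.

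First I would establish that $\e^{tQ}\to 0$ entrywise as $t\to\infty$. By \eqref{Pstates} the entry $\pp{\e^{tQ}}_{a,b}=\mathbb{P}[X(t)=I_b\mid X(0)=I_a]$ is the (defective) probability of being in transient state $b$ at time $t$ without having been absorbed, so the row sum $\sum_b\pp{\e^{tQ}}_{a,b}=\prob{\text{not absorbed by }t\mid X(0)=I_a}$ is exactly the survival probability of the killed chain. The modelling assumption that infectives recover with probability $1$ --- reinforced by the extra killing rates $\lambda_a\ge 0$ into $J$ --- guarantees that absorption into $\mathcal{A}$ occurs in finite time almost surely from every transient state, so this survival probability tends to $0$. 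Since every entry of $\e^{tQ}$ is non-negative and bounded above by the corresponding row sum, $\e^{tQ}\to 0$ entrywise.

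Finally I would convert the decay of the semigroup into a spectral statement: $\e^{tQ}\to 0$ holds if and only if every eigenvalue of $Q$ has strictly negative real part, i.e.\ if and only if $s(Q)<0$. The forward implication is what I need; it follows because, were some eigenvalue $\mu$ of $Q$ to satisfy $\mathrm{Re}(\mu)\ge 0$, the corresponding mode in the Jordan decomposition of $\e^{tQ}$ would keep $\e^{tQ}$ bounded away from $0$ along its generalised eigenspace, contradicting the entrywise convergence just established. Combining the two steps yields $s(Q)<0$.

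The main obstacle is the reducibility of $Q$: the infectious states need not communicate, and some states may have $\lambda_a=\mu_a=0$, so I cannot simply exhibit a strictly positive vector $x$ with $Qx<0$ (for instance $x=\mathbf{1}$ fails at such states) and invoke the nonsingular $M$-matrix characterisation directly. The probabilistic route above is designed to sidestep this, since it extracts strict negativity from the \emph{global} certainty of absorption rather than from a state-by-state drift condition, and requires no irreducibility hypothesis on $Q$. The one point needing care is the passage from ``recover with probability $1$'' to ``finite absorption time almost surely'', which uses the standing assumptions that all rates are finite, that there is no infectivity in state $R$, and that the number of states $m$ is finite, so that the killed chain cannot escape to infinity.
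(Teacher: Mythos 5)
Your proof is correct, but it takes a genuinely different route from the paper's. The paper argues by pure linear algebra: it notes that $Q$ is weakly diagonally dominant, decomposes a possibly reducible $Q$ into irreducible diagonal blocks, uses the recovery assumption to locate in each block a row with strict dominance, invokes the invertibility of irreducibly diagonally dominant matrices (Horn and Johnson, Theorem 6.2.27) together with Gershgorin's circle theorem to confine the spectrum to the closed left half-plane, and finally excludes $0$ as an eigenvalue by invertibility. You instead exploit the probabilistic meaning of $\e^{tQ}$: its $(a,b)$ entry is the defective transition probability of the chain killed on absorption into $\{R,J\}$, its row sums are survival probabilities, certain absorption forces $\e^{tQ}\to 0$ entrywise, and this decay is equivalent to a strictly negative spectral bound (an eigenvector for an eigenvalue with non-negative real part would prevent the semigroup from vanishing). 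Your route avoids the reducibility casework entirely, which is where the paper's argument is most delicate, and it makes transparent the observation in the paper's subsequent Remark that the result persists when all $\mu_a=0$ provided the killing rates $\lambda_a$ still guarantee absorption; the one step you rightly flag, passing from ``recovers with probability one'' to ``survival probability tends to zero'', is immediate by continuity of measure since the absorption time is almost surely finite and the state space is finite. What the paper's algebraic route buys in exchange is self-containedness at the matrix level: it needs only Gershgorin and diagonal dominance, with no appeal to hitting times or to the semigroup--spectrum correspondence.
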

\begin{proof}
Recall that $Q$ is the generator matrix of the continuous-time Markov chain $\{X(t): t\ge 0\}$ restricted to the transient states $\mathcal{C}$. Note first that $Q$ is weakly diagonally dominant, i.e.~$\abs{q_{a,a}} \ge \sum_{b\ne a}\abs{q_{a,b}}$. Our assumption that individuals recover with probability 1, which is equivalent to all states $I_a$ being transient states, implies that: $Q$ is either irreducible and there is at least one state $a$ such that $\mu_a>0$; or $Q$ is reducible, but any subset of states for which the generator matrix of the Markov chain restricted to that subset is irreducible has at least one state $a$ for which $\mu_a>0$. Consider such a subset of states and denote by $\bar{Q}$ the sub-matrix of $Q$ involving only the states in this subset. Then $\bar{Q}$ is irreducible and is therefore \emph{irreducibly diagonally dominant} and hence invertible (see, e.g.~\citealp{HorJoh85}, Theorem 6.2.27). Further, given $\bar{Q}$ is weakly diagonally dominant and the elements on the diagonal are real and negative, by Gershgorin's circle theorem all its eigenvalues are either null or have strictly negative real part. But invertibility implies 0 cannot be an eigenvalue, so they are all strictly negative and hence the spectral bound of the entire matrix $Q$ is also strictly negative.\qed
\end{proof}

Once we have verified that the spectral bound of $Q$ is negative, Lemma 8.14 of \citet{DieHeeBri13} guarantees that, given the off-diagonal elements of $Q$ (i.e.~the transition rates) are non-negative, the integral $\int_0^t{\e^{u Q}\rd u}$ converges for $t\to \infty$,
\begin{equation}
\label{intconv}
\int_0^\infty{\e^{uQ}\rd u} = -Q^{-1}
\end{equation}
and $-Q^{-1}$ is a positive matrix, whose element in position $(a,b)$ represents the average time spent in state $b$ given we started from state $a$.
Therefore, we have 
\begin{equation}
\label{R0Q}
R_0 = \int_0^\infty{\beta(t)}\:\rd t = \dbmd \nu \pp{-Q^{-1}} \lambda^\top \text{ .}
\end{equation}
To calculate $T_g$ we first use integration by parts to show that
\begin{equation}
\label{intbypart}
\intpos{t\e^{-tQ}\:}{t} = \pp{Q^{-1}}^2\text{ ,}
\end{equation}
from which
\begin{equation}
\label{TgQ}
T_g = \frac{\intpos{t\beta(t)}{t}}{\intpos{\beta(t)}{t}} = \frac{\nu \pp{Q^{-1}}^2 \lambda^\top}{\nu \pp{-Q^{-1}} \lambda^\top} \text{ .}
\end{equation}
Finally, to calculate $r$ from~\eqref{bdefQ} and \eqref{LotkaEuler}, notice that $\e^{tQ}\e^{-rt}$ can be rewritten as $\e^{t(Q-r\mathbf{1}_m)}$, where $\mathbf{1}_m$ is the identity matrix of size $m$. Then, $r$ can be found as the unique solution of
\begin{equation}
\label{rQ}
\dbmd \nu \pp{-\pp{Q-r\mathbf{1}_m}^{-1}}\lambda^\top = 1 \text{ .}
\end{equation}

The theory in \cite{DieHeeBri13}, Section 8.2, also guarantees that \eqref{rQ} has a unique solution and that $R_0 = 1 \iff r = 0$ and $R_0 > 1 \iff r > 0$, as one would expect. 

\subsubsection{Homogeneous mixing}

The theory described above extends to the case of a homogeneously mixing population with with only minor differences. In a large homogeneously mixing population, the probability of making an infectious contact with the same individual twice is negligible, so in this case the Markov chain describing the transmission process requires only the following events and rates:
\begin{equation}\begin{aligned}
I_{a} & \rightarrow I_{b} & \text{ at rate } & \sigma_{a,b} \text{ ,}\\
I_{a} & \rightarrow R & \text{ at rate } & \mu_{a} \text{ .}
\label{redtrans}
\end{aligned}\end{equation}
Again we decompose the state space of the Markov chain in $\mathcal{S}^h =
\mathcal{A}^h\cup\mathcal{C}^h$, where $\mathcal{C}^h=\{I_a\}$ are the transient
states of the system and $\mathcal{A}^h = \{R\}$ is the only absorbing state.
Consider again the generator matrix $Q^h$ restricted to the transient states only. Unlike the network model, where we worked with the rate of making infectious contacts across each single edge, here we specify the rate, $\eta_a$ say, at which an individual in infectious state $a$ makes infectious contacts in the population as a whole and we let $\eta = \pp{\eta_1,\eta_2,\dots,\eta_m}$. The overall rate at which infectious contacts are made in the population is
\begin{equation}
	\beta^h(t) = \sum_{a=1}^m \eta_a \mathbb{P} \left[ X^h(t)=I_a \Big| X^h(0) \sim\nu\right] \text{ ,}
	\label{bhdef}
\end{equation}
or, in vector notation,
\begin{equation}
\label{bhdefQ}
\beta^h(t) = \nu \e^{tQ^h} \eta^\top \text{ .}
\end{equation}

%

Given that the homogeneously mixing and the network cases differ only in the infectivity terms and the definition of the generator of the Markov chain, the same theory as before applies, 
leading to 
\begin{align}
R^h_0 &= \nu \pp{\pp{-Q^h}^{-1}} \eta^\top \text{ ,}\label{R0hQ} \\
T^h_g &= \frac{\nu \pp{\pp{Q^h}^{-1}}^2 \eta^\top}{\nu \pp{\pp{-Q^h}^{-1}} \eta^\top} \text{ ,} \label{TghQ}
\end{align}
and $r^h$ being the unique solution of 
\begin{equation}
\nu \pp{-\pp{Q^h-r\mathbf{1}_m}^{-1}} \eta^\top  = 1 \text{ .}\label{rhQ} \\
\end{equation}

\begin{rem}
One interesting difference between the network and the homogeneously mixing cases is that, on the network, Lemma \ref{spectralboundQ} holds even in the case of an SI model, where individuals never recover. Mathematically, the reason is that, in addition to the recovery rates, the diagonal terms also involve the rates of infecting the neighbour and are therefore non-zero even if the $\mu_a$s are all zero. This implies that $R_0$ is finite even if individuals are infectious forever. The intuition is that, while in the homogeneously mixing model individuals that are infectious forever can lead to $R_0=\infty$, on a network where $\dbmd$ is finite, the number of new infections an infective can make is bounded by the number of neighbours.
\end{rem}

\subsubsection{Relationship with dynamical systems}
The approach we have taken here is slightly unusual, in the sense that many researchers might find it more natural to compute $r$ by linearisation around the disease-free equilibrium of a set of ODEs describing the infection spread. \citet{DieHeeBri13}, Section 7.2, is concerned exactly with discussing the equivalence of these two approaches in the computation of $R^h_0$ in a large and homogeneously mixing population. Adapting the notation of \citet{DieHeeBri13}, we denote by $x^h_a$ the number of individuals in infectious state $I_a$ in the homogeneously mixing population and $x^h=\pp{x^h_1,x^h_2,\dots,x^h_m}$. Then the linear system for the infection dynamics in the early phase of the epidemic can be written as
\begin{equation}
\label{LinSysh}
\frac{\rd}{\rd t} x^h = x^h (T^h+Q^h)\text{ ,}
\end{equation}
where $T^h$ is a matrix containing only those terms related to transmission and $Q^h$, as before, describes the transitions between states (including recovery rates, which cause some of the rows not to sum to 0). The $m\times m$ transmission matrix has the special form $T^h=\nu^\top \eta$ because an infective in state $a$ generates new cases in state $b$ at rate $\eta_a\nu_b$, and has therefore rank 1, with the only non-zero eigenvalue given by $\nu \eta^\top$. \citet{DieHeeBri13} show how $R_0$ can then be calculated as the dominant eigenvalue of $ -T^h \pp{Q^h}^{-1}$, which in our case leads to \eqref{R0hQ}. The computation of $r$ follows the same argument once $Q^h$ is replaced by $(Q^h-r\mathbf{1})$, as can be noted by comparing Equations \eqref{R0hQ} and \eqref{rhQ} above.

The case of the network model is slightly more complicated, because in addition to the infectious state of individuals, their infectivity also depends on their degree and the number of edges still available for transmission (the used ones are `burnt out' and do not allow further transmission). Therefore, we need to write a linear system for the vector $x = \pp{x_a^{(d,s)}}$ of the numbers of individuals in state $a$ with degree $d$ and $s$ susceptible neighbours ($a\in\{1,2,\dots,m\}$, $d=1,2,\dots$ and $s\in\{0,1,\dots,d-1\}$). Again, the system can be put in the form (assuming for simplicity a finite maximum degree)
\begin{equation}
\label{LinSys}
\frac{\rd}{\rd t} x = x(T+\Sigma) \text{ ,}
\end{equation}
for a suitable transmission matrix $T$ and transition matrix $\Sigma$, and $R_0$ can be calculated as the dominant eigenvalue of $K_L = -T\Sigma^{-1}$. The matrix $K_L$, which can in general have high dimensionality, is referred to in \citet{DieHeeBri13} as the next-generation matrix \emph{with large domain}. The authors, however, explain clearly that one can get a first dimensionality reduction by focusing only on those states a newly infected individual can start from (here, only those states where $s=d-1$, the decrease in $s$ being only a transition due to the infective making infectious contacts with neighbours). Furthermore, a second dimensionality reduction occurs when new cases are generated in fixed proportions, so that, stochastically speaking, there is only one type of new infective, which starts in each state according to a fixed distribution. In our case, any new case has degree $d$ and starts in  infectious state $a$ with probability $\nu_a d \, \prob{D=d} / \E{D}$, so that, in fact, the calculation of $R_0$ (and analogously of $r$) can be reduced to a 1-dimensional equation; see \eqref{R0Q} and \eqref{rQ}. This last simplification is the same one we applied in the homogeneously mixing case when we noticed that the matrix $T^h$ has rank 1, which is the case exactly because new infectives start in each state in fixed proportions given by $\nu$.

\subsubsection{Efficient numerical methods}

We now present a numerically efficient algorithm for the computation of $r$ that avoids the costly and potentially numerically unstable matrix inversion. To facilitate application, we express it directly in terms of the transition rates of the original compartmental model.
\begin{prop}
	For a compartmental model as defined above, $r$ is given by the largest real solution to
	the equations
\begin{align}
\lambda_a - (r+\sigma_a +\lambda_a + \mu_a)z_a(r) + 
\sum_{b \neq a} \sigma_{a,b} z_b(r)  & = 0 \text{ ,} 
\label{pi}
\\
\dbmd\sum_{a=1}^m \nu_a z_a(r) & = 1\text{ .}
\label{piE}
\end{align}
\end{prop}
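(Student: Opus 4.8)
The plan is to recognise the pair \eqref{pi}--\eqref{piE} as the single scalar equation \eqref{rQ} written out coordinate-wise, with the auxiliary unknowns $z_a(r)$ collecting the entries of the column vector $z(r):=-\pp{Q-r\mathbf{1}_m}^{-1}\lambda^\top=\pp{r\mathbf{1}_m-Q}^{-1}\lambda^\top$. This vector is defined for every $r$ that is not an eigenvalue of $Q$, and it is by construction the unique solution of the linear system $\pp{r\mathbf{1}_m-Q}z(r)=\lambda^\top$.

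First I would verify that this linear system, written one row at a time, is exactly \eqref{pi}. Reading off the entries $q_{a,a}=-\pp{\sigma_a+\lambda_a+\mu_a}$ and $q_{a,b}=\sigma_{a,b}$ (for $b\neq a$), the $a$-th row of $\pp{r\mathbf{1}_m-Q}z(r)=\lambda^\top$ is $\pp{r+\sigma_a+\lambda_a+\mu_a}z_a(r)-\sum_{b\neq a}\sigma_{a,b}z_b(r)=\lambda_a$, which is a rearrangement of \eqref{pi}. Substituting $z(r)$ into the left-hand side of \eqref{piE} then gives $\dbmd\,\nu\,z(r)=\dbmd\,\nu\pp{-\pp{Q-r\mathbf{1}_m}^{-1}}\lambda^\top$, which is precisely the left-hand side of \eqref{rQ}. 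Hence a pair $\pp{r,z(r)}$ solves \eqref{pi}--\eqref{piE} if and only if $r$ solves \eqref{rQ}; these steps are routine linear algebra.

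The substantive point is the qualifier \emph{largest real solution}, which appears because clearing the matrix inverse turns \eqref{rQ} into a polynomial equation that may acquire spurious roots. By Cramer's rule the system is equivalent, away from the eigenvalues of $Q$, to $\dbmd\,\nu\,\mathrm{adj}\pp{r\mathbf{1}_m-Q}\lambda^\top=\det\pp{r\mathbf{1}_m-Q}$, a polynomial equation of degree $m$ in $r$. To isolate the correct root I would use the Laplace transform $g(r):=\intpos{\beta(t)\e^{-rt}}{t}$, which by \eqref{bdefQ} and the reasoning behind \eqref{intconv} equals the left-hand side of \eqref{rQ} on its region of convergence, so that \eqref{LotkaEuler} reads $g(r)=1$. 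Since $\beta(t)\geq 0$, the function $g$ is strictly decreasing on the half-line $r>s(Q)$, where $s(Q)$ denotes the spectral bound of $Q$; moreover $g(r)\to 0$ as $r\to\infty$ and $g$ diverges as $r\downarrow s(Q)$ (the bound $s(Q)$ being a genuine real pole because the off-diagonal entries of $Q$ are non-negative). Thus $g(r)=1$ has a single solution $r=r^\ast$ in $\pp{s(Q),\infty}$, namely the growth rate already shown to be unique after \eqref{rQ}.

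Finally I would invoke Lemma \ref{spectralboundQ}: as $s(Q)$ is the largest real part among the eigenvalues of $Q$, the root $r^\ast$ lies strictly above every eigenvalue, whereas any further real root of the cleared polynomial must lie in $\pp{-\infty,s(Q)}$, since for $r>s(Q)$ the monotone $g$ attains the value $1$ only at $r^\ast$. Therefore $r^\ast$ is the \emph{largest} real solution of \eqref{pi}--\eqref{piE}. I expect this localisation of the root --- controlling the spurious solutions produced by clearing the inverse and pinning $r^\ast$ above the spectrum of $Q$ through monotonicity of the Laplace transform together with Lemma \ref{spectralboundQ} --- to be the only delicate step; everything preceding it is bookkeeping.
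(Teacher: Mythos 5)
Your argument is correct and lands on the same underlying identity as the paper, but it gets there by a different route. The paper defines $z_a(r)$ probabilistically, as $\mathbb{E}\bigl[\int_0^\infty \Laone(t)\e^{-rt}\,\mathrm{d}t \bigm| X(0)=I_a\bigr]$, obtains \eqref{pi} by citing a path-integral result for continuous-time Markov chains (Proposition 2 of Pollett and Stefanov, 2002), derives \eqref{piE} by conditioning on the initial state via \eqref{BetaDirect} and \eqref{LotkaEuler}, and then delegates the ``unique, real, largest real part'' claim to the Diekmann--Heesterbeek--Britton theory already invoked around \eqref{rQ}. You instead define $z(r)=\pp{r\mathbf{1}_m-Q}^{-1}\lambda^\top$ and read \eqref{pi} off as the rows of the resolvent system, which is the same object (since $\E{\Laone(t)\cond X(0)=I_a}=\pp{\e^{tQ}\lambda^\top}_a$, your $z_a$ and the paper's coincide) but replaces the external citation with self-contained linear algebra; and you make explicit what the paper leaves implicit, namely why the relevant root is the \emph{largest} real one, via monotonicity of $g(r)=\intpos{\beta(t)\e^{-rt}}{t}$ on $\pp{s(Q),\infty}$ together with Lemma \ref{spectralboundQ}. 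That last step is a genuine added value of your write-up. The only point to flag is minor: your assertion that $g$ diverges as $r\downarrow s(Q)$ presumes the pole at $s(Q)$ is not cancelled, i.e.\ that $\nu$ and $\lambda$ both ``see'' the slowest-decaying part of the chain; for reducible $Q$ (which the paper explicitly allows, e.g.\ the hyper-exponential example) this requires a word of justification or a fallback to the weaker statement that a root of $g(r)=1$ exists in $\pp{s(Q),\infty}$ whenever the model is non-degenerate, which is what the cited theory in \citet{DieHeeBri13} supplies. This does not affect the conclusion that any solution of the cleared polynomial other than $r^\ast$ lies at or below $s(Q)<r^\ast$.
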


\begin{proof}
	First, define 
	\begin{equation}
z_a(r) := \mathbb{E}\bigg[
\int_0^{\infty} \Laone(t) \e^{-rt} \mathrm{d} t
\ \bigg| \ Y(0) = I_a \bigg] \text{ .}
\end{equation}
Then~\eqref{pi} follows (after a little algebra) from Proposition 2
of~\citet{Pollett:2002}.  Combining~\eqref{BetaDirect} and~\eqref{LotkaEuler},
we find that $r$ satisfies
\begin{equation}
	\begin{aligned}
		1 & = \dbmd \int_0^\infty\E{\Lambda_1(t)}\e^{-rt}\:\mathrm{d}t \\
			& = \dbmd \E{\int_0^\infty\Lambda_1(t)\e^{-rt}\:\mathrm{d}t} \\
		 & = \dbmd\sum_{a=1}^{m} \mathbb{E}\bigg[
\int_0^{\infty} \Laone(t) \e^{-rt} \mathrm{d} t
\ \bigg| \ Y(0) = I_a \bigg] \mathbb{P} \left[ Y(0) = I_a \right]\\
& = \dbmd\sum_{a=1}^m \nu_a z_a(r) \text{ ,} \label{Eunif}
	\end{aligned}
\end{equation}
which establishes~\eqref{piE}. The facts that there is a unique such value of $r$ with largest real part and that it is real lie in the theory developed above.
\end{proof}

\begin{prop}
	For homogeneous mixing, $r^h$ is given by the largest real solution to
	the equations
\begin{align}
\lambda_a - (r^h +\sigma_a + \mu_a)z^h_a(r^h) + 
\sum_{b \neq a} \sigma_{a,b} z^h_b(r^h)  & = 0 \text{ ,} 
\label{pihom1}
\\
\sum_{a=1}^m \nu_a z^h_a(r^h) & = 1\text{ .}
\label{pihom2}
\end{align}
\end{prop}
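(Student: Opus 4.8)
The plan is to mirror the proof of the preceding proposition for the network model, altering only the two features that homogeneous mixing forces upon it: the underlying Markov chain no longer has the absorbing ``contact-made'' state $J$, so its generator is $Q^h$ rather than $Q$; and the Lotka--Euler equation \eqref{LotkaEulerHomo} carries no degree factor $\dbmd$. First I would define, in exact analogy with the network case,
\[
z^h_a(r) := \E{\int_0^\infty \sum_{b=1}^m \eta_b\,\ind_{\{X^h(t)=I_b\}}\,\e^{-rt}\,\rd t \bigcond X^h(0)=I_a}\text{ ,}
\]
the expected discounted total infectivity started from state $I_a$, where $X^h$ is the chain with generator $Q^h$ restricted to the transient states $\mathcal{C}^h$ and $\eta_a$ is the per-state infectivity (written $\lambda_a$ in \eqref{pihom1}, equal to the entries of $\eta$ in \eqref{bhdefQ}). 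The resolvent identity $\intpos{\e^{-rt}\e^{tQ^h}}{t} = \pp{r\mathbf{1}_m - Q^h}^{-1}$ holds for every $r$ above the spectral bound of $Q^h$ (which is strictly negative; see below), so the vector $z^h(r)$ with components $z^h_a(r)$ is $z^h(r) = \pp{r\mathbf{1}_m - Q^h}^{-1}\eta^\top$.

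Next I would read off \eqref{pihom1}. Applying Proposition~2 of \citet{Pollett:2002} exactly as for \eqref{pi} --- equivalently, taking the $a$-th row of $\pp{r\mathbf{1}_m - Q^h}z^h(r) = \eta^\top$ --- gives $\pp{r + \sigma_a + \mu_a}z^h_a(r) - \sum_{b\ne a}\sigma_{a,b}z^h_b(r) = \eta_a$, which is \eqref{pihom1}. The sole structural difference from \eqref{pi} is the absence of $\lambda_a$ from the diagonal coefficient: in an infinitely large homogeneously mixing population each contact almost surely reaches a fresh individual, so there is no transition $I_a\to J$ and hence $q^h_{a,a} = -\pp{\sigma_a + \mu_a}$.

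I would then obtain \eqref{pihom2} by combining $\beta^h(t) = \nu\e^{tQ^h}\eta^\top$ from \eqref{bhdefQ} with the Lotka--Euler equation \eqref{LotkaEulerHomo}, decomposing the expectation over the initial state $I_a$ (which occurs with probability $\nu_a$) precisely as in the chain of equalities \eqref{Eunif}, but now without the prefactor $\dbmd$:
\[
1 = \intpos{\beta^h(t)\e^{-r^h t}}{t} = \nu\pp{r^h\mathbf{1}_m - Q^h}^{-1}\eta^\top = \sum_{a=1}^m \nu_a z^h_a(r^h)\text{ .}
\]
The disappearance of the $\dbmd$ factor relative to \eqref{piE} is exactly the imprint of the fact that homogeneous mixing does not degree-bias a newly infected individual.

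The main obstacle is the last step: showing that the coupled system \eqref{pihom1}--\eqref{pihom2} has a unique \emph{largest real} root and that it coincides with the Malthusian parameter $r^h$. As on the network this rests on the theory of \citet{DieHeeBri13}, Section~8.2, together with a strictly negative spectral bound for the relevant generator --- but here I cannot simply invoke Lemma~\ref{spectralboundQ}. As noted in the Remark above, that lemma's diagonal-dominance argument for the network drew its strict slack from the $\lambda_a$ contributions to $q_{a,a}$, which are absent from $Q^h$; indeed for a homogeneously mixing SI model the spectral bound of $Q^h$ is merely zero and $R_0^h$ diverges, so recovery is genuinely needed here. I would therefore re-run the Gershgorin and irreducibly-diagonally-dominant argument for $Q^h$, now using the assumption that infectives recover with probability $1$ to supply, within each maximal irreducible block, the strict diagonal dominance that the $\lambda_a$ terms furnished on the network. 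Once the spectral bound of $Q^h$ is shown to be strictly negative, the uniqueness, reality, and the sign correspondence $R_0^h \lessgtr 1 \iff r^h \lessgtr 0$ follow verbatim from the theory already invoked.
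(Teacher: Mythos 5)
Your proposal is correct and follows essentially the same route as the paper, whose entire proof consists of defining $z^h_a$ as the conditional expectation you wrote down, invoking \eqref{bhdef}, and repeating the chain of equalities \eqref{Eunif} without the $\dbmd$ prefactor. Your additional care over the spectral bound of $Q^h$ --- observing that the strict negativity must now come from the recovery rates $\mu_a$ rather than from the $\lambda_a$ contributions to the diagonal --- addresses a point the paper leaves implicit, but the proof of Lemma~\ref{spectralboundQ} already draws its strict dominance from the assumption that infectives recover with probability $1$, so it transfers to $Q^h$ essentially verbatim as you anticipate.
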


\begin{proof}
	Define
	\begin{equation}
z^h_a(r^h) := \mathbb{E}\bigg[
\int_0^{\infty} \beta^h(t) \e^{-r^h t} \mathrm{d} t
\ \bigg| \ Y^h(0) = I_a \bigg] \text{ ,}
\end{equation}
use~\eqref{bhdef} and follow the analogous reasoning to~\eqref{Eunif} above.
\end{proof}

\subsubsection{Examples}
\label{sec:ExamplesComp}

We consider four specific compartmental structures, for which we explicitly
state only the non-zero rates, and where the $\lambda$ rates apply to the network model and the $\eta$ rates to the homogeneously mixing model. For the first three of these structures, we do not have
different infectivities for infectious states, but instead use compartments to
model different recovery time distributions. There is quite an extensive theory
of the `phase-type' distributions that arise from use of compartments in this
way. In particular, it is known that they are dense in the space of
positive-valued probability distributions \citep{Neuts:1975} and so the present
approach can be used, at least in theory, to approximate any desired
distribution within any required level of accuracy. The fourth structure is the
standard SEIR model.  In general, models with $m$ compartments will lead to
$m$-th order polynomials, with the attendant restrictions on what can be said
analytically.  Nevertheless, as indicated above the values of $r$ can be
calculated numerically using an efficient algorithm.  

\paragraph{Markovian dynamics}

These have already been considered above; however, for completeness we note
that these would be defined through
\begin{equation}
\nu_1 = 1 \text{ ,}\quad \mu_{1} = \gamma \text{ ,} \quad \lambda_{1} =\lambda \quad \text{ and }\quad \eta_1 = \eta \text{ ,} 
\end{equation}
and we obtain
\begin{align*}
	R_0^h & = \frac{\eta}{\gamma} \text{ ,} &
	T_g^h & = \frac{1}{\gamma} &
	r^h & = \eta - \gamma \text{ ,} \\
	R_0 & = \dbmd \frac{\lambda}{\gamma + \lambda} \text{ ,} &
	T_g & = \frac{1}{\gamma+\lambda} &
	r & = (\dbmd-1)\lambda - \gamma \text{ ,}
\end{align*}
in agreement with what shown in Section \protect\ref{sec:MarkovianSIRModel}.

\paragraph{Hypo-exponential}

The hypo-exponential distribution has less variability than the
exponential and consists of a chain of phases. For example, consider the special case of two equally infective compartments in series
\begin{equation}
\nu_1 = 1 \text{ ,} \quad \sigma_{1,2} = \gamma_1
\text{ ,} \quad\mu_{2} = \gamma_2 \text{ ,}\quad \lambda_{1} = \lambda_{2} = \lambda \quad \text{ and }\quad \eta_{1} = \eta_{2} = \eta \text{ .}
\end{equation}
In this case, we obtain
\begin{align*}
	R_0^h & = \eta \pp{\frac{1}{\gamma_1}+\frac{1}{\gamma_2}} \text{ ,} \qquad \qquad \qquad \qquad \qquad T_g^h = \frac{1}{\gamma_1}+\frac{1}{\gamma_2} - \frac{1}{\gamma_1 + \gamma_2}  \text{ ,} \\
r^h & = \frac12 \left(
\eta - (\gamma_1 + \gamma_2) + \sqrt{%
	(\gamma_1 - \gamma_2)^2 + \eta^2
	+2 (\gamma_1 + \gamma_2) \eta
}\right)\text{ ,}\\
	R_0 & = \dbmd \left(1- \frac{\gamma_1 \gamma_2}%
{(\lambda +\gamma_1)(\lambda +\gamma_2)} \right)\text{ ,} \qquad \quad \quad T_g = \frac{1}{\lambda + \gamma_1}+\frac{1}{\lambda + \gamma_2} - \frac{1}{\lambda + \gamma_1 + \gamma_2}  \text{ ,}  \\
r & = \frac12 \left(
(\dbmd-2)\lambda - (\gamma_1 + \gamma_2) + \sqrt{%
(\dbmd \lambda + \gamma_2 - \gamma_1)^2 + 4 \dbmd \lambda \gamma_1
}\right)\text{ ,}
\end{align*}
which, in the special case of $\gamma_1 = \gamma_2 = \gamma$ leads to the same result that is derived from Section \ref{sec:GammaDistributedInfectiousPeriod} when $I$ is Erlang-distributed with $\alpha=2$ and scale parameter $\gamma$.

\paragraph{Hyper-exponential}

The hyper-exponential distribution has more variability than the
exponential and consists of a parallel set of phases. For example, consider the special case of two equally infective compartments in parallel, with different sojourn times
\begin{equation}
\nu_1 = \nu_2 = 1/2 \text{ ,}\quad \mu_{1} = \gamma_1
\text{ ,}\quad \mu_{2} = \gamma_2  \text{ ,}\quad \lambda_{1} = \lambda_{2} = \lambda \quad \text{ and }\quad \eta_1 = \eta_2 = \eta \text{ ,}
\end{equation}
we obtain
\begin{align*}
		R_0^h & = \frac\eta{2} \pp{\frac1{\gamma_1} + \frac1{\gamma_2}} \text{ ,} \qquad \qquad \; \quad \quad T_g^h = \frac{1}{\gamma_1}+\frac{1}{\gamma_2} - \frac{2}{\gamma_1 + \gamma_2}  \text{ ,} \\
r^h & = \frac12 \left(
\eta - (\gamma_1 + \gamma_2) + \sqrt{%
	(\gamma_1 - \gamma_2)^2 + \eta^2
}\right) \text{ ,}\\
	R_0 & = \dbmd\frac{\lambda(\lambda + \frac12 (\gamma_1 + \gamma_2))}%
{(\lambda +\gamma_1)(\lambda +\gamma_2)} \text{ ,} \qquad \quad \quad T_g = \frac{1}{\lambda + \gamma_1}+\frac{1}{\lambda + \gamma_2} - \frac{2}{2\lambda + \gamma_1 + \gamma_2}  \text{ ,} \\
r & = \frac12 \left(
(\dbmd-2)\lambda - (\gamma_1 + \gamma_2) + \sqrt{%
(\dbmd\lambda )^2 + (\gamma_1 - \gamma_2)^2
}\right) \text{ .}
\end{align*}
Note that in this case the transition matrix is reducible (see Lemma \ref{spectralboundQ}).
\paragraph{SEIR}

The standard SEIR model obeys
\begin{equation}
\nu_1 = 1 \text{ ,}\quad \sigma_{1,2} = \delta
\text{ ,} \quad\mu_{2} = \gamma \text{ ,} \quad \lambda_{2} = \lambda \quad \text{ and }\quad \eta_2 = \eta \text{ ,}
\end{equation}
where $\delta$ represents the rate of transition from the latent to the infectious state and $\gamma$ the recovery rate. We obtain
\begin{align*}
		R_0^h & = \frac{\eta}{\gamma}  \text{ ,} \qquad \qquad \qquad \quad T_g^h = \frac{1}{\delta} + \frac{1}{\gamma}  \text{ ,}\\
r^h & = \frac12 \left(
\sqrt{%
	(\gamma - \delta)^2 + 4 \eta \delta
} - (\gamma + \delta)\right) \text{ ,}\\
	R_0 & = \dbmd \frac{\lambda}{\gamma + \lambda} \text{ ,} \qquad \qquad T_g = \frac{1}{\delta} + \frac{1}{\lambda + \gamma}  \text{ ,}\\
r & = \frac12 \left(
\sqrt{%
	(\lambda + \gamma + \delta)^2 + 4 \delta((\dbmd-1)\lambda - \gamma) 
} - (\lambda + \gamma + \delta)\right) \text{ .}
\end{align*}
Note that combining the first two equations for the homogeneously, one obtains the known expression relating $R_0^h$ and $r^h$ in the SEIR model,
\begin{equation}
 \label{R0hrhSEIR}
 R_0^h = \pp{1+\frac{r^h}{\delta}}\pp{1+\frac{r^h}{\gamma}}\text{ .}
 \end{equation}

\section{A general approach via direct calculation}
\label{sec:OtherRandomTVIModels}

In this section we examine a general approach to the calculation of the real-time growth rate $r$, by direct calculation of the distribution of the time taken for infection to pass across an edge, $W_1$. Once the distribution of $W_1$ and its pdf $w_1(t)$ have been determined we can easily calculate $T_g=\E{W_1}$ and $\beta(t)=\dbmd\E{1-\e^{-A}}w_1(t)$, leading to the real time growth rate $r$ via the Lotka-Euler equation \eqref{LotkaEuler}.

Let $\Omega$ be the sample space of the infectivity profile distribution and let $\Lambda=\{\Lambda(t):t\geq0\}\subseteq\Omega$ be the family of random variables representing an infectivity profile drawn from this distribution. Recall that infectious contacts are made between an infected individual and each of their neighbours at the points of an inhomogeneous Poisson process with rate $\Lambda(t)$. Let $M$ represent the total number of contacts made across an edge, and hence $M\sim\mathrm{Pois}(A)$, where $A=\int_0^\infty\Lambda(t)\:\mathrm{d}t$. Recall that $W$ is a random variable representing the times at which infectious contacts are made and $W_1$ represents the time of the first of these infectious contacts. Clearly these two random variables are conditional on at least one infectious contact occurring at some point, i.e.~$M>0$.

\begin{prop}\label{lemma1}
The equation below links the cdf of $W_1$ to the cdf of $W\!$ given $\Lambda$.
\begin{equation}\label{lemma1eq}
\mathbb{P}(W_1\leq x|M>0)=1-\frac{\mathbb{E}_\Lambda\!\left[\e^{-A\mathbb{P}(W\leq x|M>0,\Lambda)}-\e^{-A}\right]}{\mathbb{E}_\Lambda\!\left[1-\e^{-A}\right]},\nonumber
\end{equation}
where $A=\int_0^\infty\Lambda(t)\:\mathrm{d}t$.
\end{prop}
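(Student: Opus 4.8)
The plan is to condition throughout on the realised infectivity profile $\Lambda$, under which the infectious contacts across the edge form an inhomogeneous Poisson process of intensity $\Lambda(t)$, and then to reduce the claim to the void probabilities of that process followed by one algebraic rearrangement. I would write $N(x)$ for the number of contacts made in $[0,x]$, so that $M=N(\infty)$ and, conditional on $\Lambda$, $N(x)$ is Poisson with mean $\int_0^x\Lambda(t)\rd t$.

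First I would assemble the two ingredients that the right-hand side needs. The Poisson void probabilities give, conditional on $\Lambda$,
\[
\prob{N(x)\ge1\cond\Lambda} = 1-\e^{-\int_0^x\Lambda(t)\rd t}, \qquad \prob{M>0\cond\Lambda} = 1-\e^{-A}.
\]
Secondly, I would link the cumulative intensity to the conditional law of $W$: given $\Lambda$ and $M>0$ the contact times are i.i.d.\ with density $\Lambda(t)/A$, so that $\prob{W\le x\cond M>0,\Lambda}=\int_0^x\Lambda(t)\rd t/A$, i.e.
\[
\int_0^x\Lambda(t)\rd t = A\,\prob{W\le x\cond M>0,\Lambda}.
\]
This identity is the step I expect to be the conceptual crux, since it is what injects the cdf of $W$ into a statement about the first-contact time $W_1$; the rest is bookkeeping.

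Next I would assemble these pieces. The first contact lands in $[0,x]$ exactly when $N(x)\ge1$, so $\{W_1\le x\}=\{N(x)\ge1\}$, and since this event forces at least one contact it is contained in $\{M>0\}$. The joint probability therefore collapses, and taking expectations over $\Lambda$ and substituting the cumulative-intensity identity gives
\[
\prob{W_1\le x\cond M>0} = \frac{\prob{N(x)\ge1}}{\prob{M>0}} = \frac{\mathbb{E}_\Lambda\!\left[1-\e^{-A\prob{W\le x\cond M>0,\Lambda}}\right]}{\mathbb{E}_\Lambda\!\left[1-\e^{-A}\right]}.
\]
Writing $G_\Lambda(x)=\prob{W\le x\cond M>0,\Lambda}$, a final one-line rearrangement uses $\mathbb{E}_\Lambda[1-\e^{-AG_\Lambda(x)}]=\mathbb{E}_\Lambda[1-\e^{-A}]-\mathbb{E}_\Lambda[\e^{-AG_\Lambda(x)}-\e^{-A}]$, so dividing by $\mathbb{E}_\Lambda[1-\e^{-A}]$ turns the ratio into $1$ minus the quotient in the statement.

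The points that will need genuine care are the nested conditioning---verifying that $\{W_1\le x\}\subseteq\{M>0\}$ so that the joint probability reduces to $\prob{N(x)\ge1}$---the degenerate realisations with $A=0$, on which the integrands in both numerator and denominator vanish so that the convention $A\,G_\Lambda(x)=\int_0^x\Lambda(t)\rd t=0$ causes no difficulty, and the interchange of the $\Lambda$-expectation with the Poisson probabilities, which is justified because every integrand is bounded by $1$.
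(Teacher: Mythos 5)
Your proof is correct, and it reaches the key identity by a genuinely different route from the paper. The paper's proof stays at the level of the counting variable $M$: it conditions on $M=m$ and $\Lambda$, uses the order-statistics property of the Poisson process to write $\mathbb{P}(W_1\leq x\mid M=m,\Lambda)=1-\mathbb{P}(W>x\mid M>0,\Lambda)^m$ (the first contact is the minimum of $m$ i.i.d.\ copies of $W$), and then recognises the resulting sum over $m$ as the probability generating function of a $\mathrm{Pois}(A)$ variable evaluated at $\mathbb{P}(W>x\mid M>0,\Lambda)$, which produces the exponential $\e^{-A\mathbb{P}(W\leq x\mid M>0,\Lambda)}$. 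You instead obtain that same exponential directly as the void probability of the inhomogeneous Poisson process on $[0,x]$, via the identity $\int_0^x\Lambda(t)\,\mathrm{d}t=A\,\mathbb{P}(W\leq x\mid M>0,\Lambda)$, together with the observation that $\{W_1\leq x\}=\{N(x)\geq 1\}\subseteq\{M>0\}$. Your version is shorter and avoids the infinite sum, and your closing remarks on the degenerate case $A=0$ and on bounded integrands are care the paper does not spell out. What the paper's formulation buys is that its central step --- ``$W_1$ is the minimum of $M$ i.i.d.\ copies of $W$'' --- is exactly the structure reused in the subsequent proposition, where $M$ is geometric rather than Poisson and the Marshall--Olkin characterisation of minima over a geometric number of variables is invoked; your void-probability argument is tied to the Poisson assumption and would not transfer there. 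Both arguments are sound and yield the stated formula after the same final rearrangement.
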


\begin{proof}
\begin{align*}
\mathbb{P}(W_1\leq x|M>0)&=
\frac{\mathbb{P}(W_1\leq x,M>0)}{\mathbb{P}(M>0)}
\end{align*}
Conditioning on the joint distribution of $M$ and $\Lambda$ yields
\begin{align*}
\mathbb{P}(W_1\leq x|M>0)&=
\frac{\int_\Omega\sum_{m=1}^\infty\mathbb{P}(W_1\leq x|M=m,\Lambda)\:\mathrm{d}\mathbb{P}(M=m,\Lambda)}
{\mathbb{P}(M>0)}\\
&=\frac{\mathbb{E}_\Lambda\!\left[\sum_{m=1}^\infty\mathbb{P}(W_1\leq x|M=m,\Lambda)\mathbb{P}(M=m|\Lambda)\right]}
{\mathbb{P}(M>0)}
\end{align*}
Given $M=m>0$, the first contact time $W_1$ is the minimum of $m$ iid realisations of $W$, and so $\mathbb{P}(W_1\leq x|M=m,\Lambda)=1-\mathbb{P}(W>x|M>0,\Lambda)^m$. 
\begin{align*}
\mathbb{P}(W_1\leq x|M>0)&=
\frac{\mathbb{E}_\Lambda\!\left[\sum_{m=1}^\infty(1-\mathbb{P}(W>x|M>0,\Lambda)^m)\mathbb{P}(M=m|\Lambda)\right]}
{\mathbb{P}(M>0)}\\
&=1-\frac{\mathbb{E}_\Lambda\!\left[\sum_{m=1}^\infty\mathbb{P}(W>x|M>0,\Lambda)^m\mathbb{P}(M=m|\Lambda)\right]}
{\mathbb{E}_\Lambda\!\left[\mathbb{P}(M>0|\Lambda)\right]}
\end{align*}
Finally recall that $M|\Lambda\sim\mathrm{Pois}(A)$ and hence $\E{z^M|\Lambda}=\e^{-A(1-z)}$ and $\mathbb{P}(M=0|\Lambda)=\e^{-A}$.
\end{proof}

If the cdf of $W$ is known then Proposition \ref{lemma1} can be used to determine the cdf of $W_1$. 

\subsection{Exponential total infectivity}


In this section we consider models in which the total infectivity $A=\int_0^\infty\Lambda(t)\:\mathrm{d}t$ is exponentially distributed. Important examples in this class of models includes the sSIR model with exponential infectious periods, the TVI model where $\Lambda(t)=Af(t)$, where $A$ is exponentially distributed and $f$ is any proper pdf.

\begin{prop}\label{lemma2}
If $A$ is exponentially distributed then:
\begin{enumerate}
\item $M\sim\mathrm{Geom}(p)$, where $p=1/(\E{A}+1)$ and $\mathbb{P}(M=m)=p(1-p)^m$ for $m=0,1,\dots$.
\item $M|(M>0)\sim 1+\mathrm{Geom}(p)$.
\item The cdf of $W_1$ is given by
\begin{equation}
\mathbb{P}(W_1\leq x)=1-\frac{p\mathbb{P}(W>x)}{1-(1-p)\mathbb{P}(W>x)}\label{family}
\end{equation}
\end{enumerate}
\end{prop}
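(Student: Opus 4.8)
The plan is to treat the three claims in turn, since each builds on the previous. For the first claim I would compute the marginal law of $M$ by mixing the conditional Poisson law over the exponential law of $A$. Writing $A\sim\mathrm{Exp}(\theta)$ with $\E{A}=1/\theta$ and using $\mathbb{P}(M=m\mid A)=A^m\e^{-A}/m!$, the marginal $\mathbb{P}(M=m)=\int_0^\infty \frac{a^m\e^{-a}}{m!}\,\theta\e^{-\theta a}\,\rd a$ reduces, via the elementary Gamma integral $\int_0^\infty a^m\e^{-(1+\theta)a}\,\rd a = m!/(1+\theta)^{m+1}$, to $\theta/(1+\theta)^{m+1}$. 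Setting $p=\theta/(1+\theta)$ and checking $p=1/(\E{A}+1)$ and $1-p=1/(1+\theta)$ gives exactly $\mathbb{P}(M=m)=p(1-p)^m$. This is the classical fact that a Poisson rate mixed over an exponential is geometric; the only real content is the integral.

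For the second claim I would simply renormalise. Since $\mathbb{P}(M>0)=1-\mathbb{P}(M=0)=1-p$, for $m\ge1$ we have $\mathbb{P}(M=m\mid M>0)=p(1-p)^m/(1-p)=p(1-p)^{m-1}$, so $M-1$ conditioned on $M>0$ is again $\mathrm{Geom}(p)$; this is the memorylessness of the geometric and yields $M\mid(M>0)\sim 1+\mathrm{Geom}(p)$.

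The third claim is where the work lies, and I would obtain it directly from the first two. Conditioning on $M=m>0$, the first contact time $W_1$ is the minimum of $m$ contact times, each with survival function $\mathbb{P}(W>x)$, so $\mathbb{P}(W_1>x\mid M=m)=\mathbb{P}(W>x)^m$. Averaging against the law of $M\mid(M>0)$ from the second claim gives a geometric series,
\[
\mathbb{P}(W_1>x\mid M>0)=\sum_{m=1}^\infty \mathbb{P}(W>x)^m\,p(1-p)^{m-1}=\frac{p\,\mathbb{P}(W>x)}{1-(1-p)\mathbb{P}(W>x)},
\]
valid since $(1-p)\mathbb{P}(W>x)<1$, and subtracting from $1$ yields \eqref{family}. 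Equivalently one can substitute $A\sim\mathrm{Exp}$ directly into Proposition~\ref{lemma1}, using $\E{\e^{-sA}}=1/(1+s\E{A})=p/(p+s(1-p))$ evaluated at $s=\mathbb{P}(W\le x)$ and at $s=1$; the algebra then collapses to the same expression.

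The step I expect to be the main obstacle is the identity $\mathbb{P}(W_1>x\mid M=m)=\mathbb{P}(W>x)^m$ with one and the same survival function across all $m$. This presupposes that, given $M=m$, the $m$ contact times are i.i.d.\ draws with survival $\mathbb{P}(W>x)$, i.e.\ that the count $M$ is independent of the normalised timing of the contacts. This holds transparently for profiles of the form $\Lambda(t)=Af(t)$ with fixed shape $f$ — the prototypical example in this section — because then, given $\Lambda$, the Poisson points are i.i.d.\ from $f$ irrespective of $A$ (hence of $M$), so $\mathbb{P}(W>x)=1-F(x)$ and the summation is rigorous. I would therefore carry out the argument in that setting; when the normalised shape instead co-varies with $A$ one cannot extract a single $\mathbb{P}(W>x)$ from the sum and must retain the $\Lambda$-conditional cdf $\mathbb{P}(W\le x\mid\Lambda)$ inside the expectation of Proposition~\ref{lemma1}, so pinning down the precise class for which \eqref{family} holds verbatim is the delicate point.
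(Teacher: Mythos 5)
Your proof is correct and, for parts 1 and 2, is essentially the paper's own argument (the paper states the Poisson--exponential mixture result without writing out the Gamma integral, and invokes lack of memory for part 2). For part 3 the paper instead cites Marshall and Olkin (1997) for the survival function of the minimum of $1+\mathrm{Geom}(p)$ i.i.d.\ random variables; your geometric-series summation is a direct, self-contained proof of exactly that identity, so the two routes are equivalent in substance. What you add --- and what the paper's proof passes over silently --- is the observation that writing $\mathbb{P}(W_1>x\mid M=m)=\mathbb{P}(W>x)^m$ with the \emph{marginal} survival function of $W$ requires the number of contacts $M$ to be independent of their normalised timing. This is a genuine restriction, not a pedantic one: it holds for $\Lambda(t)=Af(t)$ with $f$ fixed (or random but independent of $A$), but it fails for the sSIR model with exponentially distributed infectious period, which the preamble to this section lists as a member of the class. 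Indeed, for $I\sim\mathrm{Exp}(\gamma)$ one has $\mathbb{P}(W>x)=\e^{-\gamma x}$ and $p=\gamma/(\gamma+\lambda)$, so \eqref{family} would give $\mathbb{P}(W_1>x)=\gamma\e^{-\gamma x}/\pp{\gamma+\lambda-\lambda\e^{-\gamma x}}$, whereas the exact calculation of Section \ref{sec:MarkovianSIRModel} (or Proposition \ref{lemma1} applied directly, which correctly keeps the $\Lambda$-conditional cdf inside the expectation) gives $W_1\sim\mathrm{Exp}(\lambda+\gamma)$. Your decision to restrict the statement to profiles whose normalised shape is independent of $A$ is therefore the correct resolution; the paper's appeal to the Marshall--Olkin lemma, which is stated for variables independent of the random index $N$, needs the same hypothesis.
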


\begin{proof}
\begin{enumerate}
\item $M|A\sim\mathrm{Pois}(A)$ and removing the conditioning yields $\mathbb{P}(M=m)=p(1-p)^m$.
\item This follows directly from the lack-of-memory property of the geometric distribution. 
\item 
\citet{Mar97} 
show that, given any iid sequence of random variables $X_1,X_2,\dots$ with cdf $F_X(x)$ and $N\sim1+\mathrm{Geom}(p)$, the cdf of $U=\min\{X_1,\dots,X_N\}$ satisfies $1-F_U(x)=\frac{p(1-F_X(x))}{1-(1-p)(1-F_X(x))}$. By 2, $W_1$ is the minimum of a geometrically distributed number of random variables, distributed according to $W$.
\end{enumerate}
\end{proof}

\citet{Mar97} considers families of random variables $\mathcal{F}$ in which the minimum of a random number $N$ of iid variables is also within the family $\mathcal{F}$. The authors show that in such cases $N$ \emph{must} be geometrically distributed and the family is characterised by \eqref{family} with $0<p<1$.

\section{Numerical method}
\label{sec:NumericalMethod}

It is possible to calculate the real-time growth rate $r$ numerically for the most general model described in this paper by following the procedure detailed below. First we use Monte Carlo sampling with Poisson thinning to obtain an estimate for $\bbE[\Laone(t)]$ and hence $\beta(t) = \dbmd \E{\Laone(t)}$; then we solve Equation \eqref{LotkaEuler} iteratively for $r$.

We begin by choosing a grid of $G$ values, $t_1 < t_2 < \dots < t_G$, over which we will apply a quadrature method to compute the integral in \eqref{LotkaEuler} (e.g.~the trapezium rule). It is important that the grid covers the range of $t$ for which the integrand has a significant contribution to the integral. Finding a suitable value for $t_G$ is usually not a problem because, since $\int_0^\infty\bbE[\Laone(t)]\:dt$ is finite, then $\bbE[\Laone(t)]$ tends to zero as $t\rightarrow\infty$ and, in the most interesting case of $r>0$, so does $\e^{-rt}$.

Next we simulate $M$ samples from the infectious contact distribution. For the $i$-th sample $\{\Lambda^{(i)}(t):t\geq 0\}$, first calculate $m_i=\sup\{\Lambda^{(i)}(t):t\geq 0\}$ and then simulate events from a homogeneous Poisson process with rate $m_i$, for example using the Gillespie algorithm. This algorithm begins at time 0, and the time until the next event is drawn sequentially from an exponential distribution with rate $m_i$. To obtain a sample of events from the inhomogeneous Poisson process with rate $\Lambda^{(i)}(t)$, we accept event $j$ at time $\tau_j$ in the homogeneous Poisson process with probability $\Lambda^{(i)}(\tau_j)/m_i$, and reject the event otherwise. Let $\tau^{(i)}_1$ denote the time of the first accepted event, that is the time of the first event in the inhomogeneous Poisson process with rate $\Lambda^{(i)}(t)$. For more details regarding this Poisson thinning procedure see \citet{Kin92}. We need only to continue simulating events in the homogeneous Poisson process until either the first event is accepted or the time of the first accepted event $\tau^{(i)}_1$ is known to be after the last of our grid points, $t_G$. Finally to complete the calculation of $\bbE[\Laone(t)]$, we use the fact that
\begin{align*}
\bbE[\Laone(t)]&=\bbE[\Lambda(t)\bbP(\text{no event in }(0,t)|\Lambda)] \\
&=\bbE[\Lambda(t)\bbE[\ind_{\{\text{no event in }(0,t)\}}|\Lambda]] \\
&\approx\sum\limits_{i=1}^M\Lambda^{(i)}(t)\ind_{\{\tau^{(i)}_1<t\}}
\end{align*}
and hence that
\[ \beta(t)\approx \dbmd 
\sum\limits_{i=1}^M\Lambda^{(i)}(t)\ind_{\{\tau^{(i)}_1<t\}}. \]

Note that we can use the same set of simulated $\Lambda^{(i)}(t)$ and $\tau^{(i)}_1$ to calculate the integrand in \eqref{LotkaEuler} for all of our grid points $t_1,\dots,t_G$ and therefore the computation time needed for the first part of the procedure does not depend on the number of grid points $G$.

The second part of the algorithm involves the use any iterative method (e.g.~the secant method; we used MATLAB\textsuperscript{\textregistered} built-in function \texttt{fzero}) coupled with any quadrature method (e.g.~the trapezium rule) to find $r$ from Equation \eqref{LotkaEuler}.


\section{Numerical results}
\label{sec:NumericalResults}

In this section we explore the key epidemiological quantities in both the homogeneously mixing and the network models, for various choices of the network degree distribution and the infectivity profile. When a TVI model is used we assume a gamma-shaped infectivity profile, i.e.~proportional to the pdf of a gamma distribution, where the shape parameter $\alpha$ takes values of 0.5, 1, 2, 10 and the limiting case of $\alpha\to\infty$, which corresponds to the Reed-Frost model where all the infectivity is spread at time $T_g^h=T_g$. When a sSIR model is assumed, the duration $I$ of the infectious period is assumed to follow a gamma distribution, again with shape parameter $\alpha = 0.5, 1, 2, 10$ and the limiting case of $\alpha\to\infty$, which corresponds to a fixed duration infectious period. The choices of $\alpha$ in both cases, correspond to the gamma distributions where the ratio between the variance and the mean is 2, 1, 1/2, 1/10 and 0, respectively. 

\subsection{Network model outputs}
\label{sec:NetworkModelOutputs}
In Figures \ref{fig:TVIoutput} and \ref{fig:sSIRoutput} we plot three network-related outputs -- $T_g$ (left column), $R_0$ (middle column) and $r$ (right column) -- as a function of the basic reproduction number for the corresponding homogeneously mixing model (see section \ref{sec:Comparisons}). Figure \ref{fig:TVIoutput} examines the TVI model (with fixed $R_0^h$ and $T_g^h$) and Figure \ref{fig:sSIRoutput} examines the sSIR model (at fixed $R_0^h$ and $\E{I}$) for different network degree distributions: regular (top row), Poisson (middle row) and negative binomial with variance 5 times larger than the mean (bottom row). Together with $r$, we also show the approximation described in Section \ref{sec:Approximater}, for each infectivity profile.

Note how the value of $R_0^h$ unequivocally determines the value of $R_0$ when the infectivity profile is non-random (Figure \ref{fig:TVIoutput}, central column), but not when it is random (Figure \ref{fig:sSIRoutput}, central column). Other observations are: $T_g$ and $T_g^h$ are always identical in the Reed-Frost model; increasing the infectivity leads to shorter generation times on the network and larger values of the real-time growth rate (with the exception of small infectivities for very overdispersed durations of infection in the sSIR model, e.g.~when $\alpha<1$). Finally, the strongest effects of infection interval contraction and repeated contacts, i.e.~shorter $T_g$, lower $R_0$ and $r$, and therefore also a less accurate approximation, are experienced when $\dbmd$ is smallest: at fixed mean degree, this is the case of the regular network (zero variance); at fixed distribution, for smallest mean degree (Figure \ref{fig:Impactofnetworks}). With the latter in mind, we chose here a mean degree of $\E{D} = 3$, which is large enough to allow the existence of a giant connected component, but small enough to accentuate saturation effects.

\begin{figure}
\centering
\includegraphics[width = \textwidth]{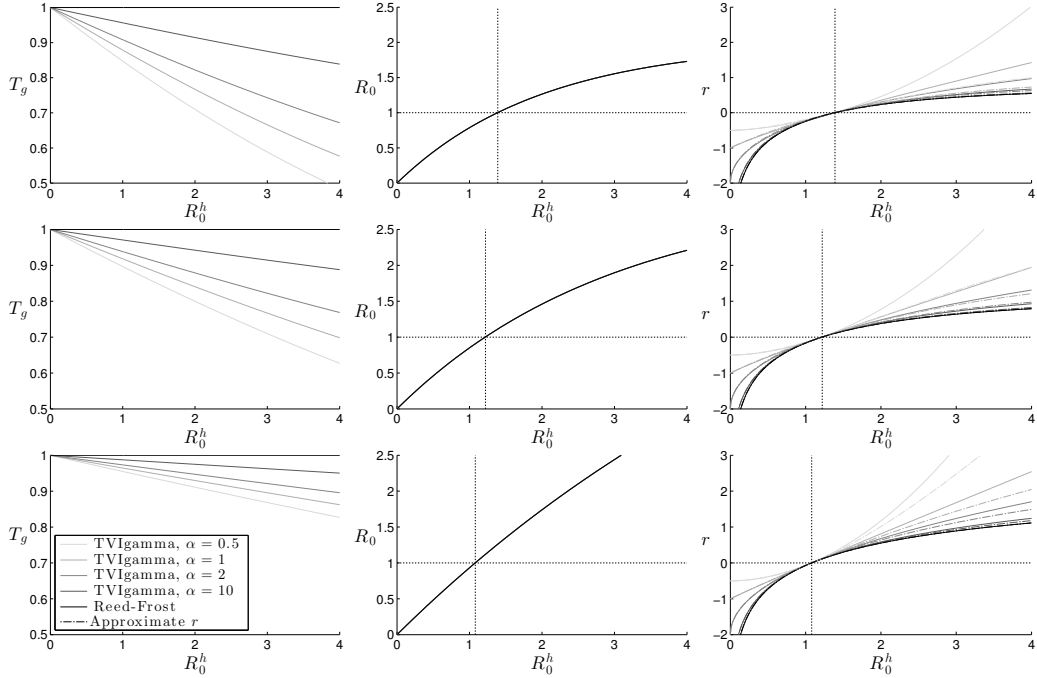}
\caption{The network-related outputs (the generation time $T_g$, first column; the basic reproduction number $R_0$, second column; and the real-time growth rate $r$, together with its approximation $r_{\text{app}}$, third column) as a function of the basic reproduction number of the corresponding homogeneously mixing model, $R_0^h$, for the TVI model with gamma-shaped infectivity profiles having shape parameter $\alpha=0.5,1,2,10$ and $\infty$ (the Reed-Frost model). Each row considers a different degree distribution: regular ($\Var{D}=0$), first row; Poisson ($\Var{D}=\E{D}$), second row; and negative binomial (for which we imposed $\Var{D}=5\E{D}$), third row. All networks have mean degree $\E{D}=3$, a reasonably small value to accentuate the network saturation effects on disease spread. All infectivity profiles match the same value of $T_g^h=1$. The black vertical dotted line indicates the value of $R_0^h$ for which $R_0=1$, which depends on the degree distribution. The horizontal black dotted line shows where $R_0=1$ (middle column) or $r=0$ (right column).}
\label{fig:TVIoutput}
\end{figure}

\begin{figure}
\centering
\includegraphics[width = \textwidth]{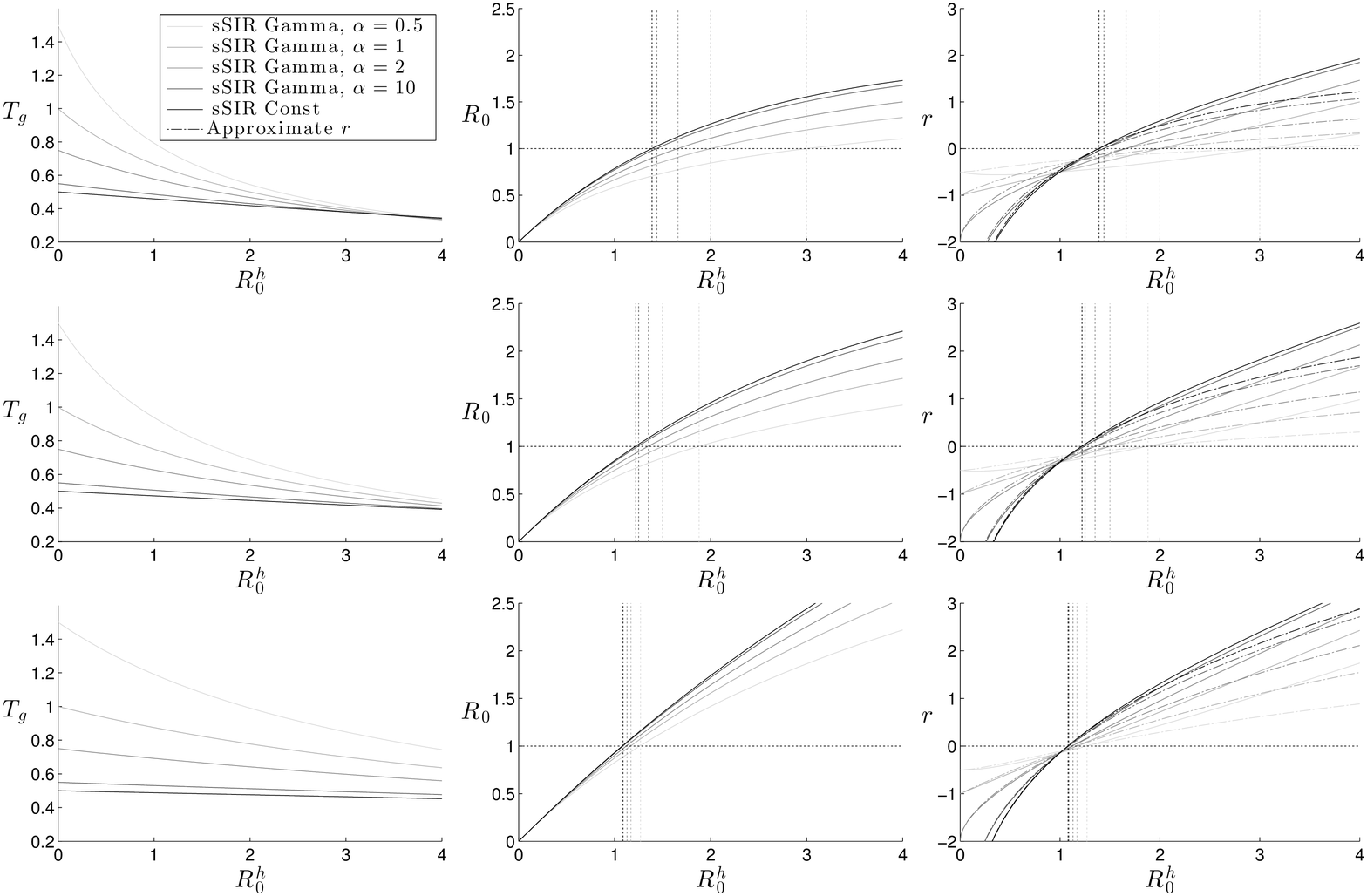}
\caption{Network-related quantities (the generation time $T_g$, first column; the basic reproduction number $R_0$, second column; and the real-time growth rate $r$, together with its approximation $r_{\text{app}}$, third column) as a function of the basic reproduction number of the corresponding homogeneously mixing model, $R_0^h$, for the sSIR model with gamma-shaped infectious period distributions having shape parameter $\alpha=0.5,1,2,10$ and $\infty$ (i.e.~constant infectious period). Each row considers a different degree distribution: regular ($\Var{D}=0$), first row; Poisson ($\Var{D}=\E{D}$), second row; and negative binomial (in which $\Var{D}=5\E{D}$), third row. All infectivity profiles have infectious periods with $\E{I}=1$ and no latent period. All networks have mean degree $\E{D}=3$, a reasonably small value to accentuate the network saturation effects on disease spread. The vertical dotted lines indicate the value of $R_0^h$ for which $R_0=1$ for each infectivity profile (corresponding shades of grey). The horizontal black dotted line shows where $R_0=1$ (middle column) or $r=0$ (right column).}
\label{fig:sSIRoutput}
\end{figure}

The impact of the network structure is highlighted in Figure \ref{fig:Impactofnetworks}, where four networks with different degree distributions (regular, Poisson, geometric and negative binomial with variance 5 times larger than the mean) are investigated. The real-time growth rate was plotted as a function of the mean degree $\E{D}$. Except for the geometric degree distribution, for all other networks the ratio $\Var{D}/\E{D}$ is constant and the magnitude of this constant is reflected, for any chosen infectivity profile, in the clear ordering of $r$ from regular to Poisson to negative binomial. The comparison between different networks and infectivity profiles is performed as follows. All infectivity profiles are from the sSIR model and have the same value of $\E{I}$. We fixed $r^h=0.5$, and from this the corresponding values of $R_0^h$ and $\E{A}$ were obtained, which were then used to compute $r$. Note how all network models converge to the homogeneously mixing model in the limit of large mean degree. Finally, if we were to plot $r$ as a function of $R_0$ (not shown) then the choice of the degree distribution would be irrelevant, because the only impact of $D$ is in the factor $\dbmd$, which appears identically in the computation of both $R_0$ and $r$. In other words, the relationship between $R_0$ and $r$ given by Equation \protect\eqref{rR0rel} only depends on $W_1$ (i.e.~on $w_1(t)$), which is affected by the choice of the infectivity profile, but not by the overall mean infectivity or network structure.

\begin{figure}
\centering
\includegraphics[width = 0.9\textwidth]{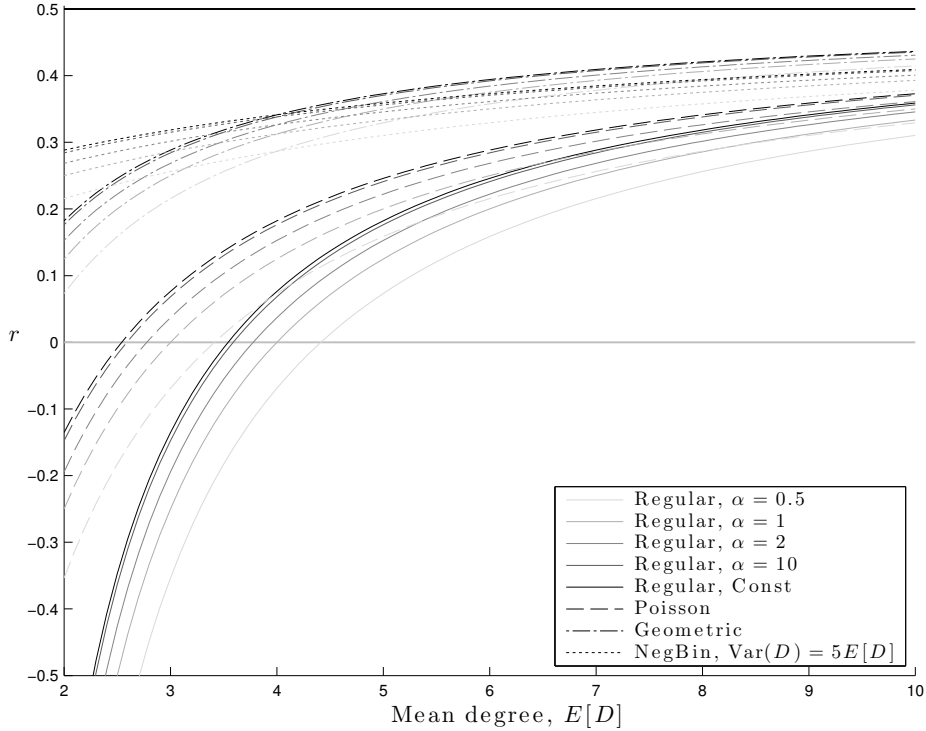}
\caption{Impact of the network degree distribution: $r$ as a function of the mean degree for fixed $r^h=0.5$, using the sSIR model with gamma distributed infectious periods having shape parameter $\alpha = 0.5,1,2,10$ and $\infty$ (i.e.~constant duration) and four different degree distributions: regular ($\Var{D} = 0$, continuous line), Poisson ($\Var{D} = \E{D}$, dashed), geometric ($\Var{D} = \E{D} ( 1+\E{D})$, dash-dotted) and negative binomial with $\Var{D} = 5 \E{D}$ (dotted). All infectivity profiles share $\E{I} = 1$ and $r^h = 0.5$. The horizontal thick black and grey lines show $r=0.5$ and $r=0$. For increasing mean degree all network models converge to the homogeneously mixing model and the smaller mean degrees are associated with stronger network saturation effects and hence slower epidemic spread.}
\label{fig:Impactofnetworks}
\end{figure}

\subsection{Model comparisons}
\label{sec:ModelComparisons}
Figure \ref{fig:ModelComparison} highlights the importance of being unambiguous about what is kept fixed when a  model comparison is performed. Even in the absence of network structure, Figures \ref{fig:ModelComparison}a and \ref{fig:ModelComparison}b show how, for the sSIR model with $R_0^h>1$, $r^h$ can increase or decrease as the variance of the infectious period duration $I$ decreases (the shape parameter $\alpha$ increases) when either $\E{I}$ or $T_g^h$ are kept fixed, respectively. On the other hand, Figures \ref{fig:ModelComparison}c and \ref{fig:ModelComparison}d show how, for the sSIR model on a network (regular with degree 3 in this example), $r$ can diverge as $R_0$ approaches $\dbmd$ or not, depending of whether $\E{I}$ or $T_g$ is kept fixed, respectively. The approximation described in Section \ref{sec:Approximater} is also shown, and it is found to be particularly inaccurate because this example tests it at the limit when effects of repeated contacts and infection interval contraction are strongest. 

The intuitive reason why $r$ diverges when $\E{I}$ is kept fixed (Figure \ref{fig:ModelComparison}c) is that, as $R_0\to \dbmd$, the infection rate $\lambda\to\infty$, which means that the time at which the first infectious contact occurs shrinks to 0 and thus $r\to\infty$. Because the approximation ignores precisely this aspect (the time of the first infectious contact is approximate with the time of a randomly selected contact), unlike $r$, the approximation $r_{\text{app}}$ does not diverge. Furthermore, as $\lambda\to\infty$ the generation time $T_g$ shrinks to 0. Therefore if $T_g$ is kept fixed instead of $T_g^h$ (as in Figure \ref{fig:ModelComparison}d), $\E{I}$ and $T_g^h$ diverge, but the resulting $r$ does not (because the time of the first infectious contact does not shrink to 0). The approximation $r_{\text{app}}$ even decreases because the average time of an infectious contact diverges. The same phenomenon of $r$ exploding as in Figure \ref{fig:ModelComparison}c tends to occur also for the TVI model (not shown) although as the shape of the infectivity profile becomes more peaked around its mode, the infection rate just after infection decreases. This prevents $T_g$ from getting too close to 0. In the limiting case of the Reed-Frost model $T_g=T_g^h$, and therefore $r=r_{\text{app}}$ and neither of them diverge (nor decrease) when $R_0\to\dbmd$.

\begin{figure}
\centering
\includegraphics[width = 0.5\textwidth]{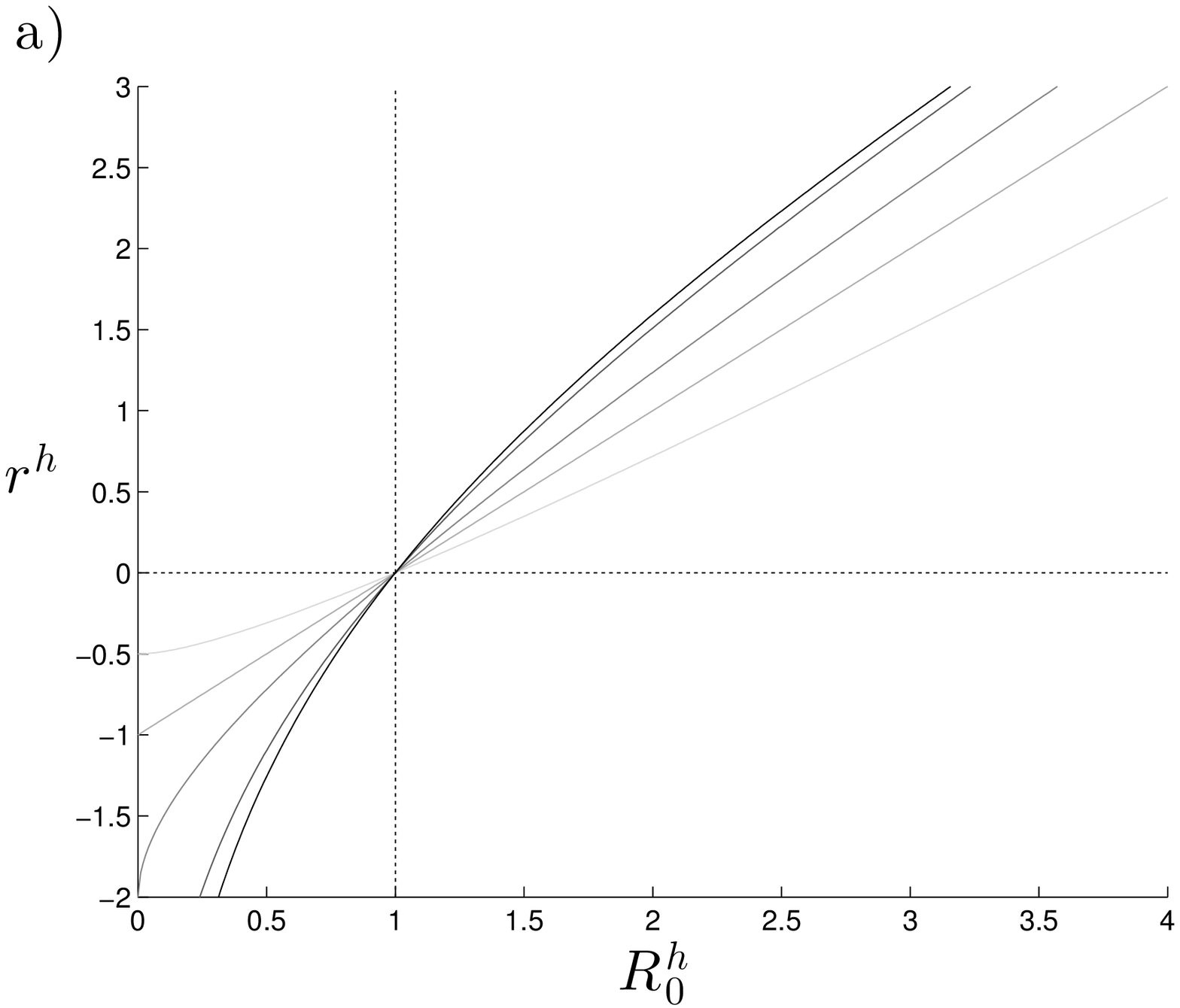}\includegraphics[width = 0.5\textwidth]{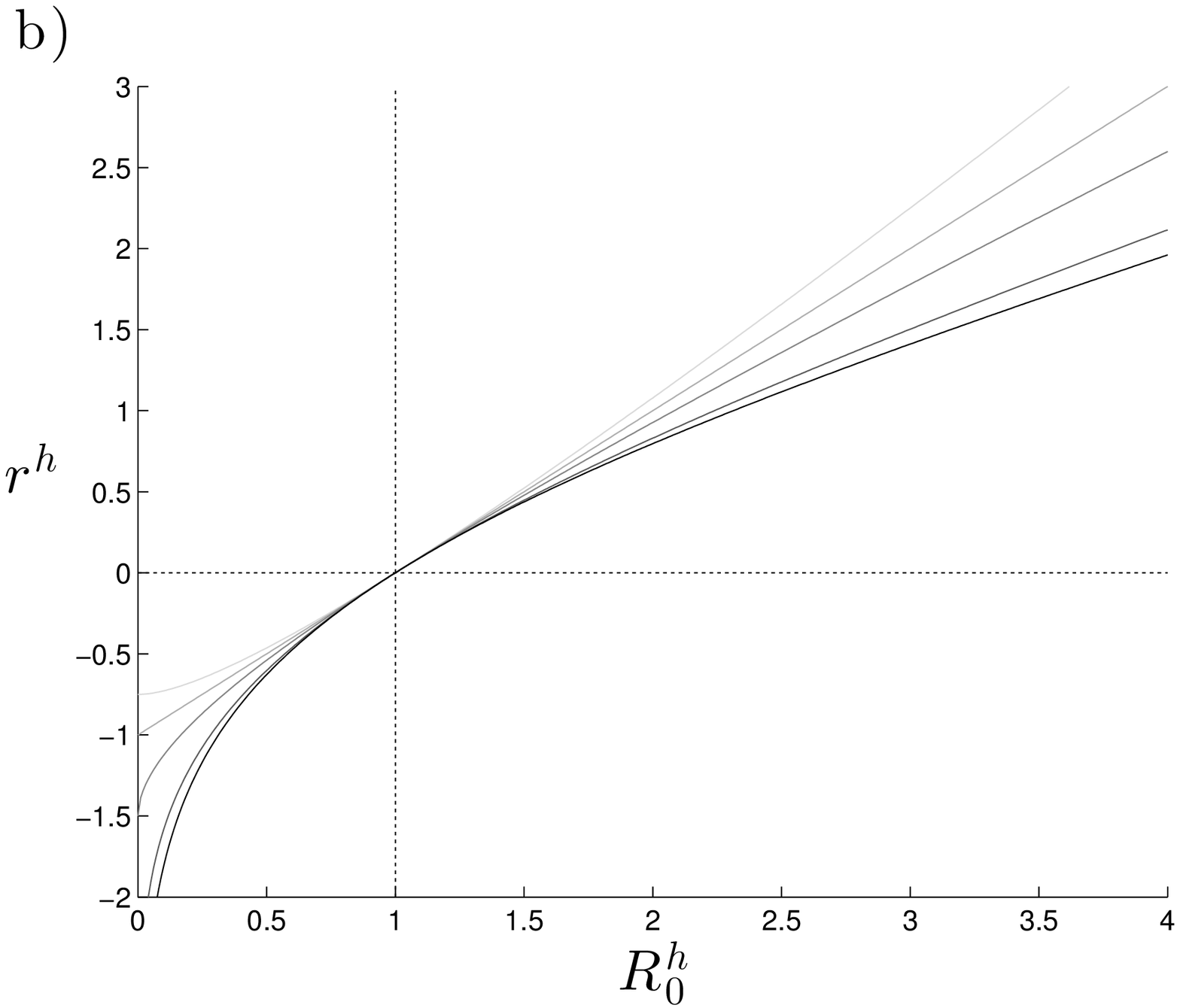}
\includegraphics[width = 0.5\textwidth]{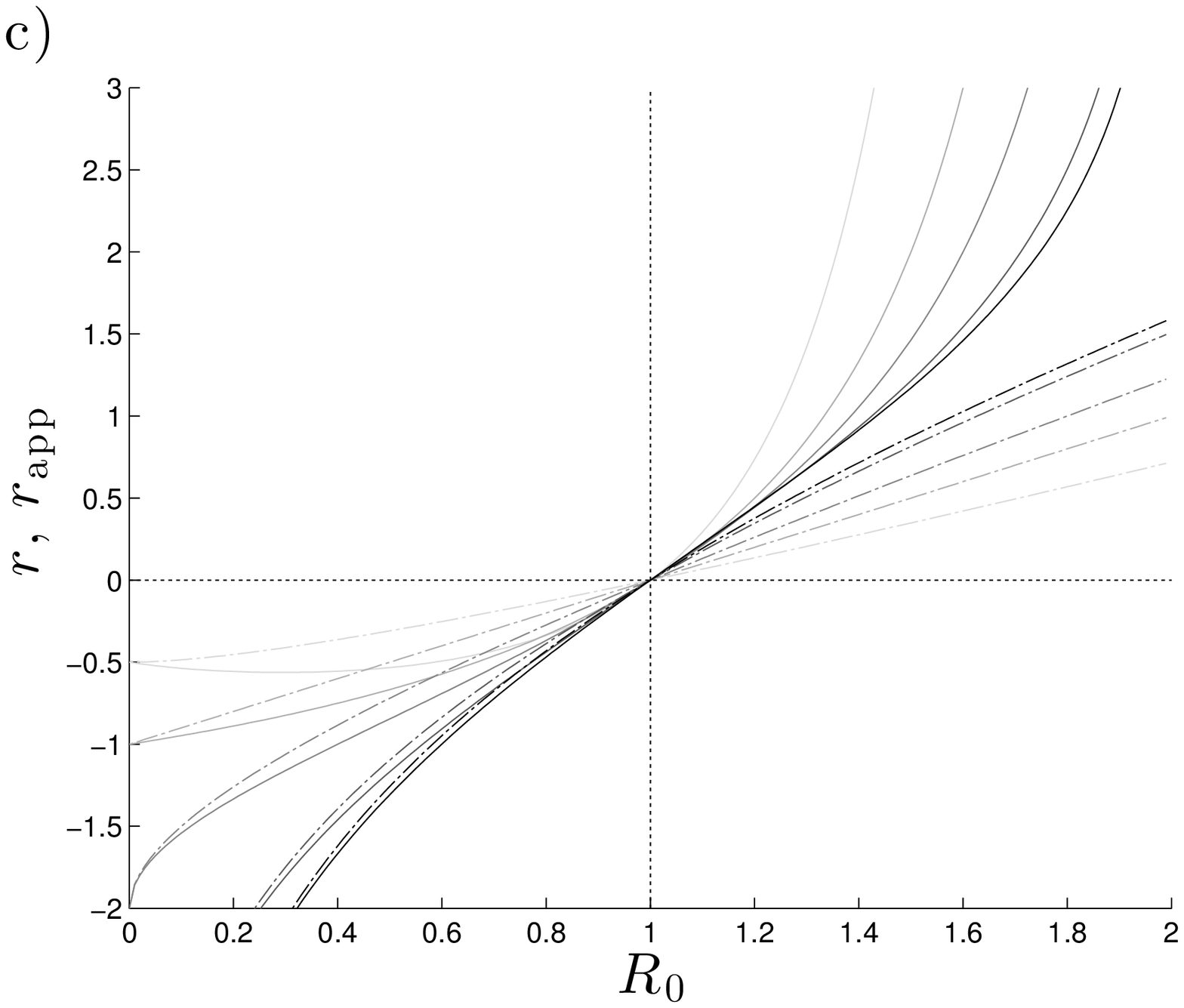}\includegraphics[width = 0.5\textwidth]{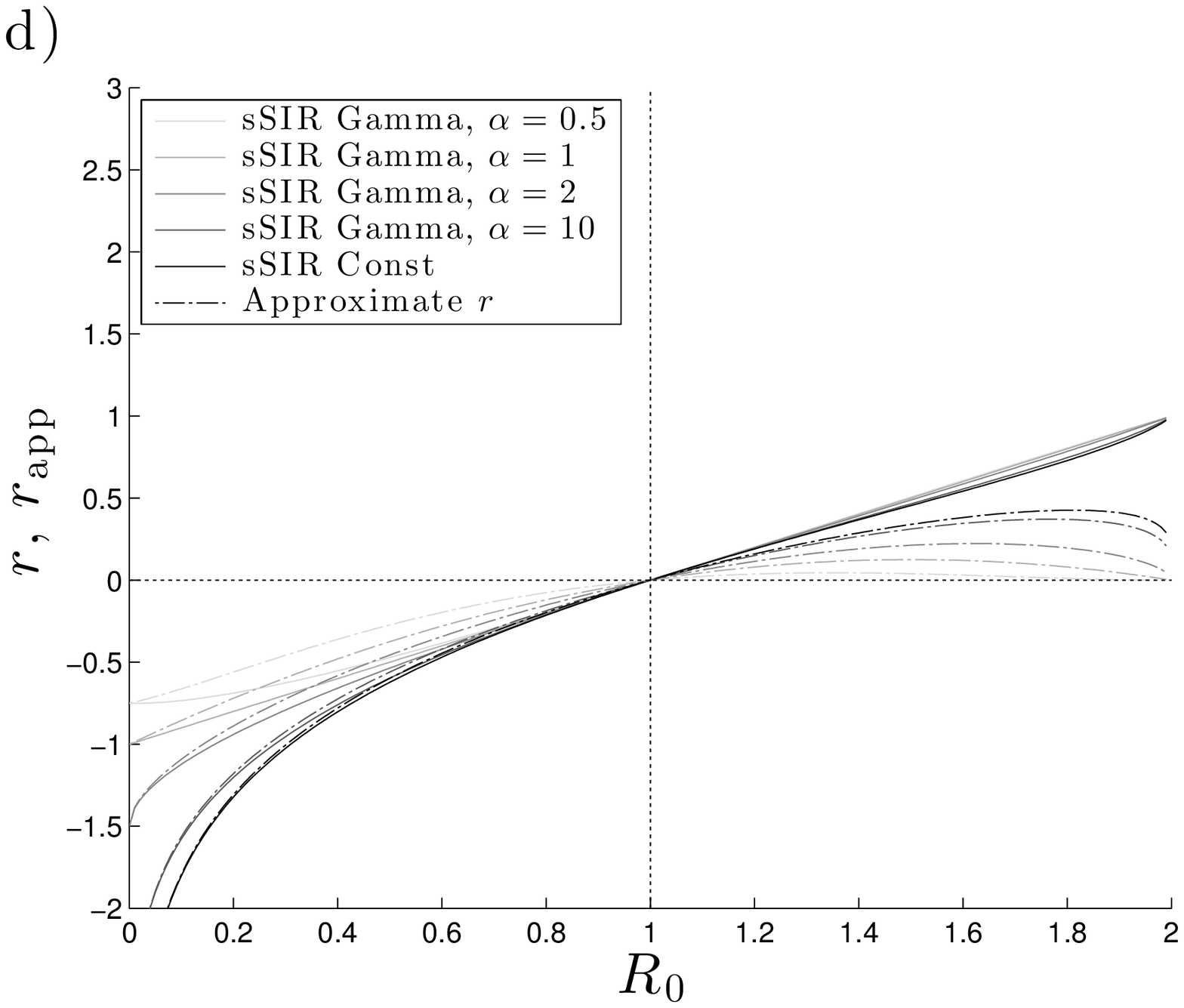}
\caption{Illustration of which quantities to keep fixed when the impact of different infectivity profiles is compared. First row: relationship between $r^h$ and $R_0^h$ in a purely homogeneously mixing population and sSIR infectivity profile with gamma distributed duration of the infectious period $I$ having shape parameter $\alpha = 0.5,1,2,10$ and $\infty$ (constant duration). In (a) $\E{I}=1$ and in (b) $T_g^h=1$ from Equation \eqref{TghsSIR} are kept fixed. Note the opposite monotonic relationships between $r^h$ and $R_0^h$ when $R_0^h>1$ in the two cases. Second row: relationship between $r$ and $R_0$ on a regular network with degree $D\equiv 3$ for sSIR infectivity profiles with gamma distributed duration of the infectious period $I$ having shape parameter $\alpha = 0.5,1,2,10$ and $\infty$ (constant duration). In (c) $\E{I}=1$ and in (d) $T_g=1$ is kept fixed. Note how, as $R_0\to 2$, $r$ diverges in (c) but not in (d); the approximation $r_{\text{app}}$ does not diverge in (c) and even shrinks in (d).}
\label{fig:ModelComparison}
\end{figure}

\subsection{Estimating $R_0$ from $r$}
\label{sec:EstimatingR0fromr}
Above we have explored how the real-time growth rate and the basic reproduction number change as functions of the model parameters. However, $r$ is often one of the few readily available quantities that can be measured from attainable data. Therefore, it is interesting to explore how our estimates of $R_0$ based on observed values of $r$ are affected by the model structure. More specifically, assume we observe an exponentially growing epidemic and that we somehow obtain a measurement of the real-time growth rate, which we denote hereafter by $\hat{r}$. Assume also that the social contact structure is properly represented by a locally tree-like network (i.e.~our network model represents `the truth', which may admittedly not be realistic). We may then ask: how is ability to estimate $R_0$ affected by the potential lack of information about such contact structure? And how is it further reduced if in addition we also have no information about the infectivity profile except for the generation time?  Figure \ref{fig:EstimatingR0fromr} explores exactly these issues, for a regular network with degree 3. 

\begin{figure}
\centering
\includegraphics[width = 0.5\textwidth]{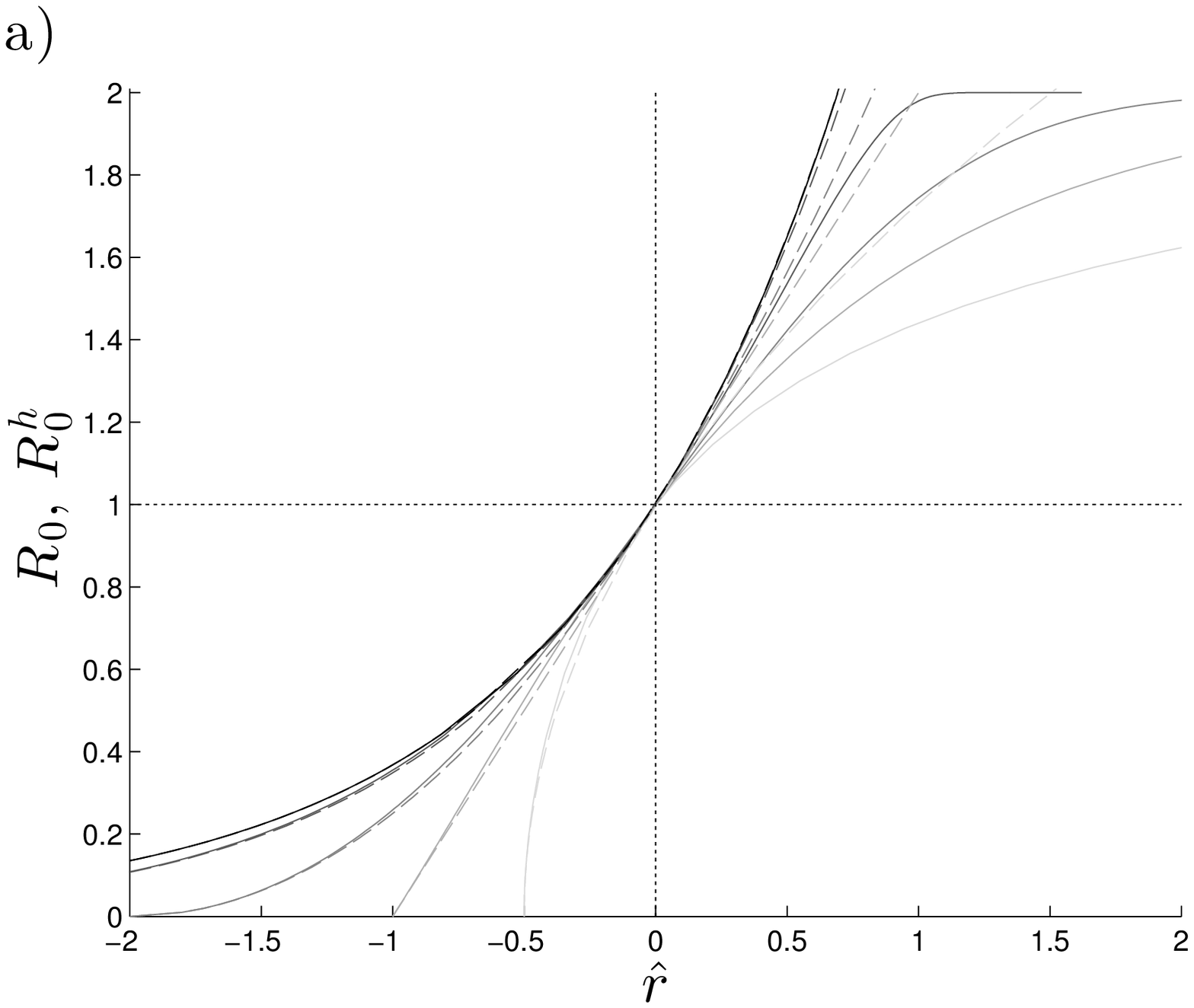}\includegraphics[width = 0.5\textwidth]{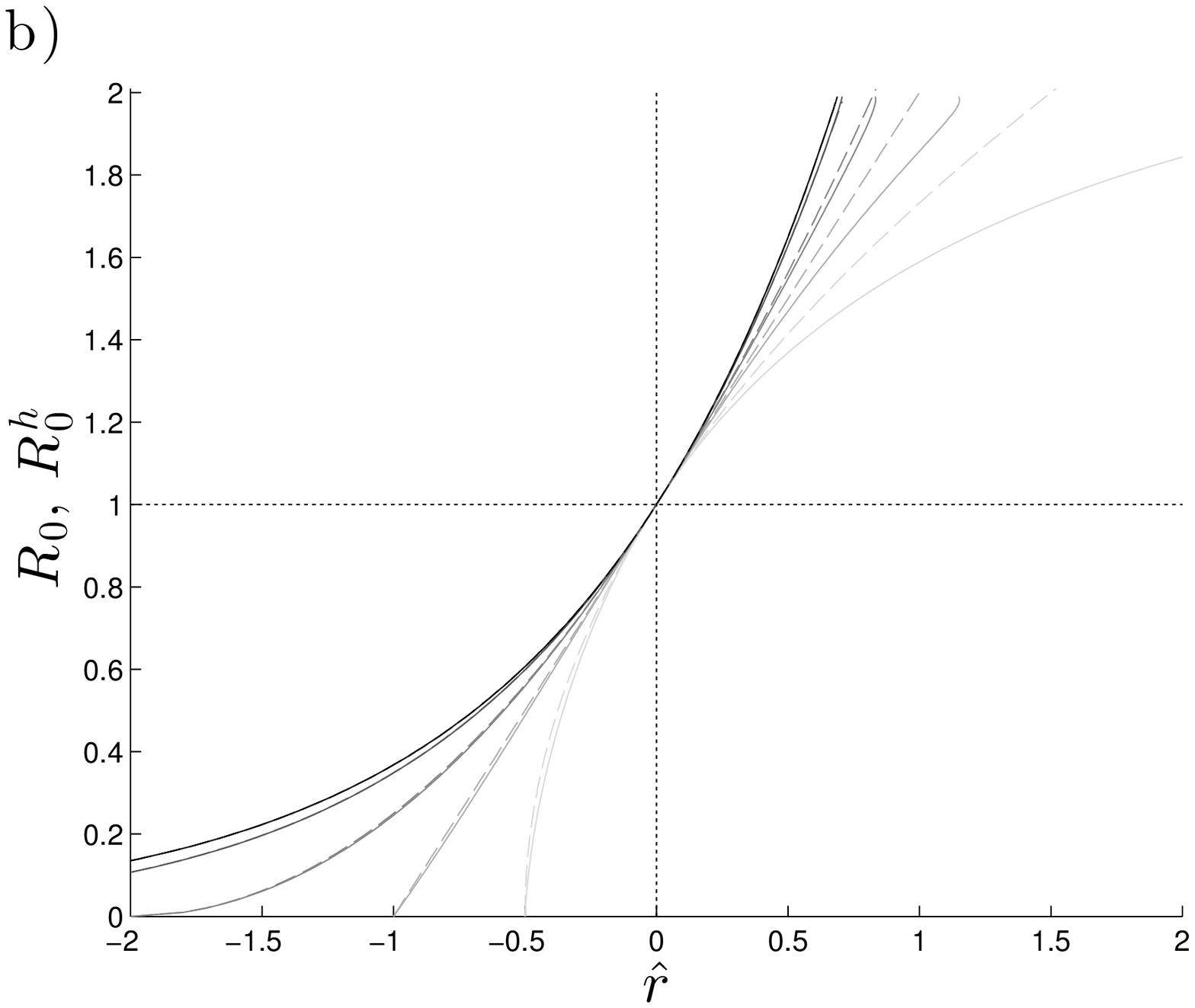}
\includegraphics[width = 0.5\textwidth]{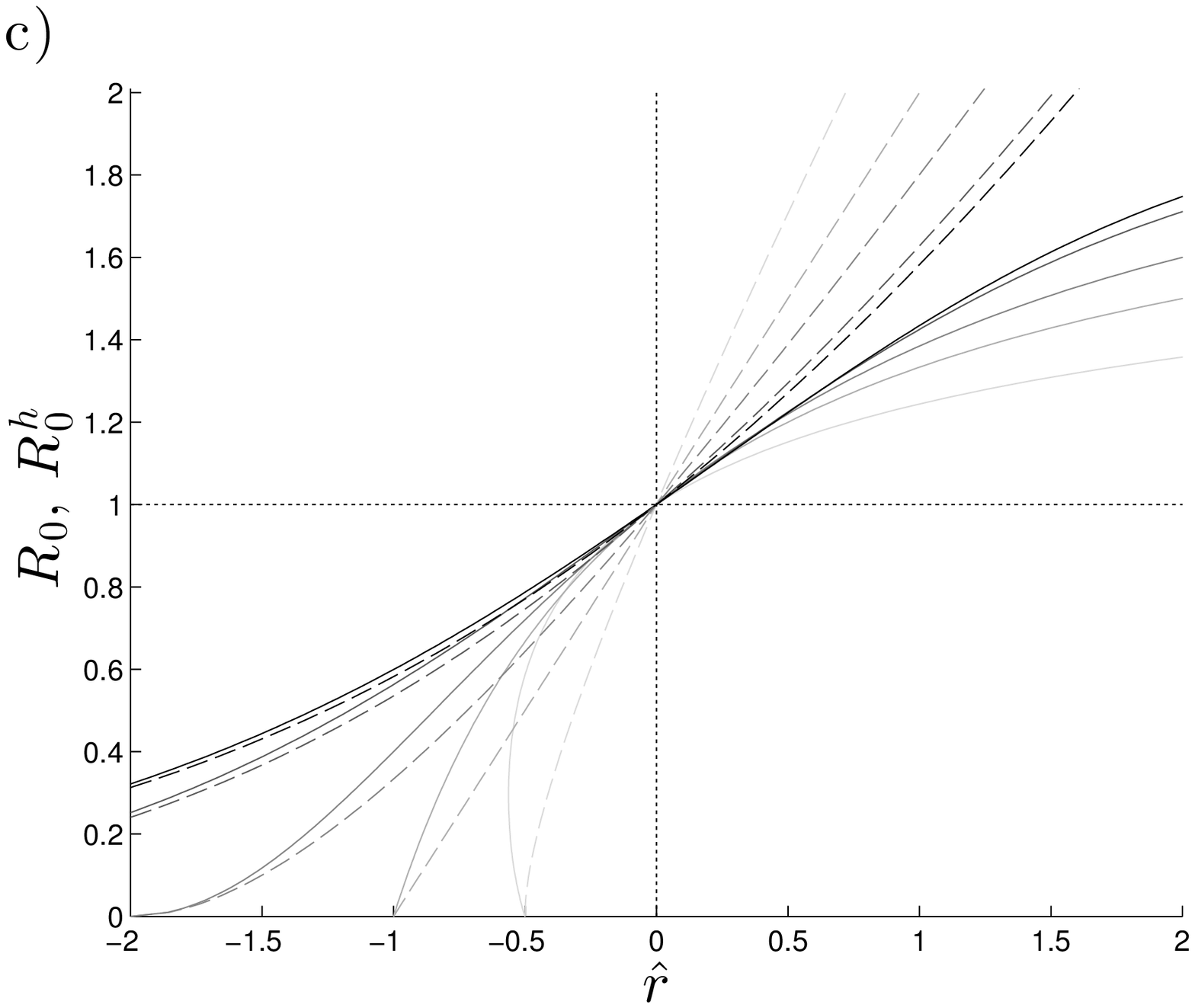}\includegraphics[width = 0.5\textwidth]{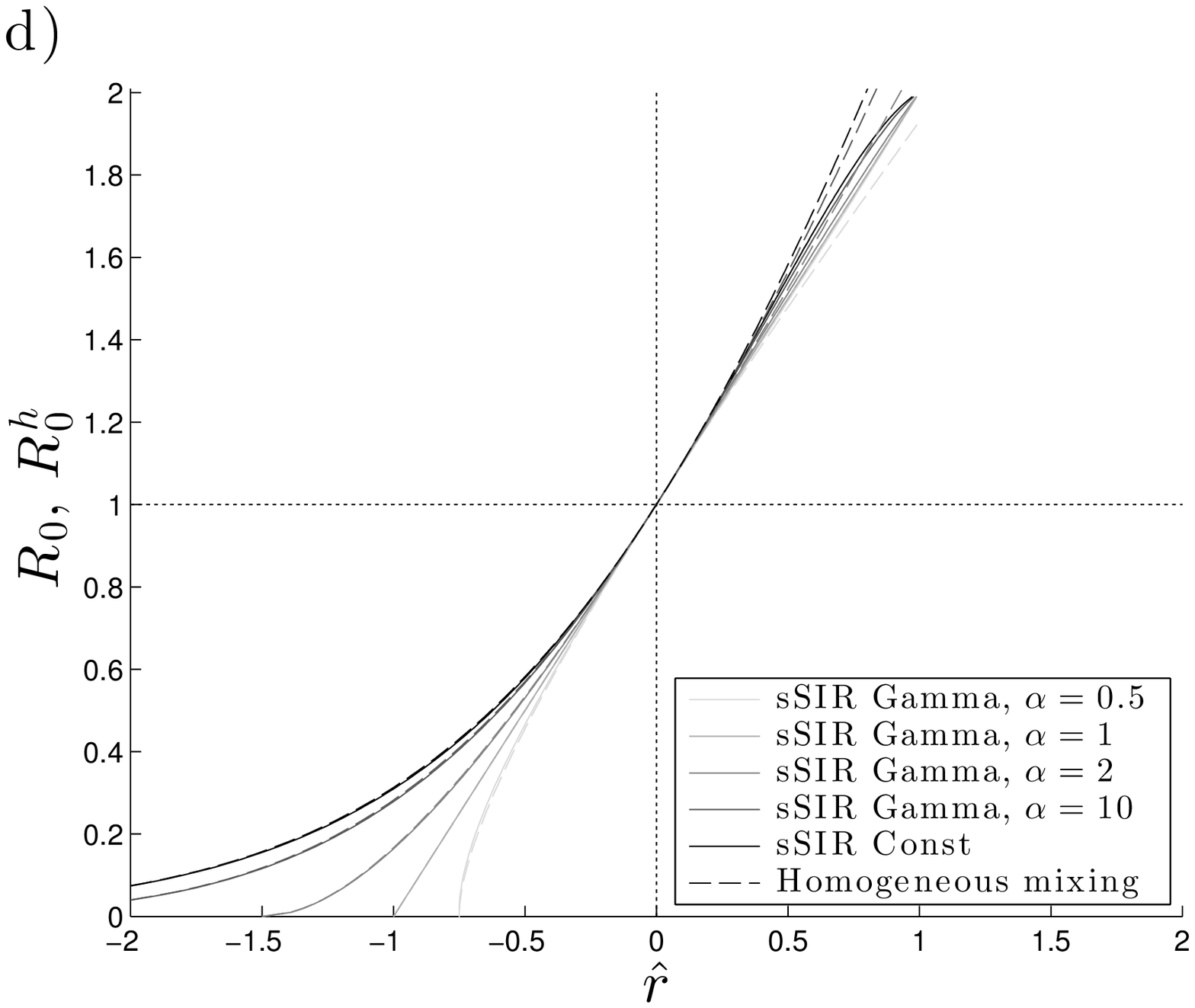}
\caption{$R_0$ as a function of $r$ on a regular network with degree $D\equiv 3$.
First row: the TVI model is assumed, having gamma shaped infectivity profiles with
shape parameters $\alpha=0.5,1,2,10$ and $\infty$. In (a) $T_g^h$ is kept fixed and in (b) we impose $T_g^h = \hat{T}_g$ (see main text). Second row: the sSIR model
is assumed, having gamma distributed duration of the infectious period $I$ with
shape parameter $\alpha = 0.5,1,2,10$ and $\infty$. In (c) $\E{I}$ is kept fixed and in (d) we impose $T_g^h=\hat{T}_g$ (see main text). The dashed lines represent the
homogeneously mixing model limit for each infectivity profile, where $r=r^h$
and $R_0=R_0^h$. The black dotted lines show where $r=0$ (vertical) and $R_0=1$
(horizontal). 
}
\label{fig:EstimatingR0fromr}
\end{figure}

In the top and bottom rows of Figure \ref{fig:EstimatingR0fromr} we consider the TVI and sSIR models, respectively. In the left column we assume $\hat{r}$ is observed and information about the infectivity profile is available by accurate observation of infected cases. Denote by $R_0(r)$ the relationship between $r$ and $R_0$ on the network and by $R_0^h(r^h)$ the relationship in a homogeneously mixing population. Then Figures \ref{fig:EstimatingR0fromr}a and \ref{fig:EstimatingR0fromr}c show the `true' $R_0(\hat{r})$ (i.e.~on the network) associated to the observed exponential growth and the $R_0^h(\hat{r})$ that would be estimated from $r^h = \hat{r}$ in a homogeneously mixing population with the same infectivity profile: in \ref{fig:EstimatingR0fromr}a the TVI profiles share the same $T_g^h$, while in \ref{fig:EstimatingR0fromr}c the sSIR profiles all have the same $\E{I}$. In the right column we take a different approach: in both cases we know the type of infectivity profile and its variability (TVI with $\alpha$ parameter for the shape in \ref{fig:EstimatingR0fromr}b and sSIR with $\alpha$ parameter for the distribution of $I$ in \ref{fig:EstimatingR0fromr}d), and in addition we assume we can measure the generation time intervals, which we denote by $\hat{T}_g$, by observing the times at which new cases arise as the epidemic unfolds. Let us denote by $R_0(r,T_g)$ the relationship between $R_0$, real-time growth rate and generation time on the network and by $R_0^h(r^h,T_g^h)$ the corresponding relationship in a homogeneously mixing population. In Figures \ref{fig:EstimatingR0fromr}b and \ref{fig:EstimatingR0fromr}d we then first plot the `true' $R_0(\hat{r}, \hat{T}_g)$ associated to the observed exponential growth and generation time, and we compare it with the $R_0^h(\hat{r},\hat{T}_g)$ that would be estimated in a homogeneously mixing population where cases would grow exponentially at a rate $r^h=\hat{r}$ with generation time $T_g^h=\hat{T_g}$ (note that here we are fixing $T_g^h$ also for the sSIR model, rather than $\E{I}$). Although our methodology is also valid when $r<0$, the setup in this section is meaningful only when a large epidemic is being observed. Therefore, we restrict our attention to the case of $r>0$. 

The first observation from Figure \ref{fig:EstimatingR0fromr} is that, at fixed infectivity profile (and for $r>0$), the estimates of $R_0$ obtained by ignoring the social structure are always conservative, i.e.~$R_0^h$ is always larger than $R_0$. This result is in line with the findings of \citet{Kenah11} and \citet{Bal+14}. The latter study, in particular, is more general as it considers additional social structures, namely multitype and households models. However, in the present work we examine more extensively the impact of different infectivity profiles.

Assume now that information about the full infectivity profile is not available: then from Figure \ref{fig:EstimatingR0fromr}a we know that, if we assume a TVI model and we only know $r$ and $T_g^h$, then ignoring the social structure and assuming a Reed-Frost model with latent period of length $T_g^h$ leads to the most conservative estimate about $R_0$. The discrepancy between $R_0^h$ and $R_0$ can potentially be quite large, especially for large $r$. However, in realistic settings the value of $\dbmd$ can be significantly larger than the value $\dbmd=2$ chosen here to accentuate network effects, leading to a much smaller discrepancy. On the other hand, if the sSIR model is assumed and only $r$ and $\E{I}$ are known, no obvious upper bound is seen in Figure \ref{fig:EstimatingR0fromr}c, as overdispersed distributions for $I$ can make the difference between $R_0^h$ and $R_0$ arbitrarily large. However, as shown in Figure \ref{fig:ModelComparison} (top line), if $T_g^h$ is kept fixed instead of $\E{I}$  the monotonic relationship in $r$ is reversed and the constant duration of the infectious period would then lead to the most conservative estimates for $R_0^h$. Furthermore, $R_0^h$ is allowed to grow unbounded because of the lack of a latent period: if such a period were added, then a combination of a random latent period followed by an infectious period concentrated in a single point in time, would lead to the most conservative estimate of $R_0^h$ (as is the case for the Reed-Frost model in Figure \ref{fig:EstimatingR0fromr}a).

Finally, if we were to decide which observation scheme to adopt between measuring $T_g^h$ (or $\E{I}$) by carefully studying the shedding profile of an infective or measuring $\hat{T}_g$ by observing the time interval of cases during an unfolding epidemic, then the latter would lead to estimates of $R_0^h(\hat{r},\hat{T}_g)$ obtained by ignoring the social structure altogether that are much closer to the ``true'' $R_0(\hat{r},\hat{T}_g)$. This is true both when the full details of the infectivity profile are known and when no information about it is available except for the estimated $\hat{T}_g$ (in particular in Figure \ref{fig:EstimatingR0fromr}d). However, we do recognise that measuring $\hat{T}_g$ from data might not be a straightforward task (even ignoring realistic logistic constraints), because of potential observation biases (see \citealp{ScaliaTomba:2010}, for a thorough discussion of such biases).

\subsubsection{Reproduction numbers and pair approximation models}

It is worth mentioning that, in addition to $R_0$, various other reproduction numbers have been defined in the literature for epidemics spreading in population with a social structure. \citet{Gold:2009} and \citet{BalPelTra14} analysed the properties of numerous such quantities in the
context of epidemic models with two levels of mixing, while
\citet{House:2011} used pair approximation methods to analyse four thresholds
for Markovian epidemic dynamics on regular networks (i.e.~where each individual has
the same number of neighbours) with significant levels of clustering, i.e.~in the presence of short loops like those in population $P_3$ in Figure \ref{threepops}. While some of these reproduction numbers are beyond the scope of our current paper, we note that one is particularly relevant for a locally tree-like network structure ($P_2$ in Figure \ref{threepops}): the exponential-growth associated reproduction number, called $R_r$ by \citet{Gold:2009} and \citet{BalPelTra14} and $r_0$ by \citet{House:2011}. This is defined as the reproduction number one would infer from knowledge of the contact interval distribution $w(t)$ and the early exponential growth rate $r$, assuming the network structure is ignored, i.e.
\begin{equation}
	R_r := \frac{1}{\int_0^{\infty} w(t) {\rm e}^{-r t} \rd t } .
	\label{Rr}
\end{equation}
Note that $R_r$ is defined as a network quantity that aims at approximating $R_0$, more than the result of a comparison between a network and a homogeneously mixing model. However, if we assume that $r = \hat{r}$ is observed, then $R_r = R^h_0(\hat{r})$ exactly as discussed in the previous section when the two models are compared at fixed contact interval distribution ($w = w^h$). Therefore the relationship between $R_0$ and $R_r$ is fully described by Figures \ref{fig:EstimatingR0fromr}a and \ref{fig:EstimatingR0fromr}c. 

The context is conceptually different when $r$ is not observed but calculated exactly from the basic parameters of the network model. This is the typical case of pair approximation models on networks. Because the focus is usually on tracking the epidemic dynamics using a small system of ODEs, these models usually assume a constant recovery rate $\gamma$ (i.e.~exponentially distributed duration of infection). The real-time growth rate $r$ is obtained by linearising the system of ODEs for the number of pairs of nodes in different states and is given by $r = \lambda(\dbmd-1) - \gamma$, in agreement with Section \ref{sec:MarkovianSIRModel}. Then, from the definition of $w(t)$ and Equation \eqref{Rr},
\begin{equation}
 \label{RrMark}
 R_r = 1+\frac{r}{\gamma} = (\dbmd-1) \frac{\lambda}{\gamma} \text{ .}
 \end{equation} 
This is a well-defined reproduction number, which shares the same threshold at 1 with $R_0$ (defined in Section \ref{sec:R0}) and whose expression might sometimes be preferable to that of $R_0$. However, it must be noted that $R_r$ does not respect the standard verbal definition of the basic reproduction number as the average number of secondary cases infected by a typical case in a large and almost fully susceptible population (i.e.~when the depletion of susceptibles can be neglected and the number of cases is growing exponentially). In particular, $R_r$ is not bounded by the average number $\dbmd$ of neighbours of a typical new infective, as $r$ has no upper bound. 

\citet{House:2011} showed, in the case of Markovian dynamics on regular networks, that $R_0$ as defined in Section \ref{sec:R0} can be
recovered from differential equation models where variables are added allowing
the numbers infected in different infectious generations to be explicitly counted. Linearisation of this extended dynamical system around the disease-free equilibrium yielded a value of $R_0$ equal to that defined in Section \ref{sec:R0}, and we conjecture, based on the results of \citet{Barbour:2013}, that a similar augmented dynamical systems analysis of non-Markovian dynamics on
configuration model networks with heterogeneous degree distributions would
yield our more general expression for $R_0$.


\section{Conclusions}
\label{sec:conclusions}

We have presented a comprehensive description of analytical and numerical computation of the real-time growth rate for Markovian and non-Markovian models of infection spread on unclustered networks, providing explicit calculations in many important special cases. 
Although the strong assumption of no clustering  represents the main limitation of the present work, exact temporal dynamics in the presence of clustering appear to be theoretically intractable and very few exact analytical results are currently available.

In the comparison between homogeneously mixing and network models, we have highlighted here how there is a strong need to be clear about how the comparison is performed, as different choices on which quantities are kept fixed in the comparison can lead to very different conclusions about the impact of networks. However, in general we can conclude that, for any chosen infectivity profile, at equal mean total infection pressure of a typical infective ($R_0^h = \dbmd \E{A}$) or at equal mean total number of new cases generated by a typical infective ($R_0^h = \dbmd \E{1-\e^A} = R_0$), the presence of a network results in faster spread compared to a homogeneously mixing model. The fundamental reason is what we called \emph{infection interval contraction}: in the presence of repeated infectious contacts towards the same susceptible, the actual infection occurs at the time of the first infectious contact, which is always stochastically smaller than the time of a randomly selected infectious contact. This also implies that, at fixed observed $r$, neglecting the network structure leads to overestimating the value of $R_0$. This has already been observed in \citet{Kenah11} and a formal proof of it can be found in \citep{Bal+14}.

We have mostly considered conditions where the effect of infection interval contractions is exacerbated by low mean degree, high infectivity and/or large variability or temporal spread of the infectious period. Some of these conditions occur in contexts such as that of sexually transmitted infections (many individuals are monogamous or nearly-monogamous and long infectivity profiles -- e.g.~for HIV -- imply many repeated contacts with the same partner) and this reinforces the standard trend of using network models in this research area. However, in most other realistic epidemic systems with a large and reasonably dispersed degree distribution and realistic infectivity profiles, the impact of infection interval contractions is often mild (if detectable at all).  This suggests that, for example in the context of airborne infections, detecting the presence of a network social structure by only looking at aggregate epidemiological measures such as $R_0$ and $r$ is unlikely, but also that the overestimate of $R_0$ from observed $r$ obtained neglecting the social structure can in general be quite accurate. The latter is fundamentally the reason why, in pair approximation models, the exponential-growth associated reproduction number $R_r$ is often numerically very similar to $R_0$. Furthermore, the same reason explains why the approximation described in Section \ref{sec:Approximater} adopted by \citet{Fraser:2007} and \citet{Pellis:2010} in the case of influenza lead to reasonably accurate approximations of $r$ or good estimates of $R_0$ and the household reproduction number from $r$.

\paragraph{Acknowledgements}
 
We gratefully acknowledge the Engineering and Physical Sciences Research Council for supporting this work and the two anonymous referees for their constructive comments, which have improved the clarity of the exposition.



\end{document}